\theoremstyle{remark}
\newtheorem{theorem}{Theorem}
\newtheorem{lemma}[theorem]{Lemma}
\begin{document}
\title{Cognitive Radio Networks with Probabilistic Relaying: Stable Throughput and Delay Tradeoffs 
\thanks{This paper was made possible by a NPRP grant 4-1034-2-385 from the
Qatar National Research Fund (a member of The Qatar Foundation). The
statements made herein are solely the responsibility of the authors.}
}

\author{\IEEEauthorblockN{Mahmoud Ashour$^\dag\!$, Amr A. El-Sherif$^\dag\!$, Tamer ElBatt$^*$, and Amr Mohamed$^\dag\!$}   

\IEEEauthorblockA{$^\dag$ Computer Science and Engineering Department, Qatar University, Doha, Qatar.\\ 
$^*\!\!$ Wireless Intelligent Networks Center (WINC), Nile University, Giza, Egypt.}
\{\tt m.ashour@qu.edu.qa, amr.elsherif, telbatt, amrm@ieee.org\}
}

\maketitle

\begin{abstract}
In this paper, we study and analyze fundamental throughput and delay tradeoffs in cooperative multiple access for cognitive radio systems. We focus on the class of randomized cooperative policies, whereby the secondary user (SU) serves either the queue of its own data or the queue of the primary user (PU) relayed data with certain service probabilities. Moreover, admission control is introduced at the relay queue, whereby a PU's packet is admitted to the relay queue with an admission probability. The proposed policy introduces a fundamental tradeoff between the delays of the PU and SU. Consequently, it opens room for trading the PU delay for enhanced SU delay and vice versa. Thus, the system could be tuned according to the demands of the intended application. Towards this objective, stability conditions for the queues involved in the system are derived. Furthermore, a moment generating function approach is employed to derive closed-form expressions for the average delay encountered by the packets of both users. The effect of varying the service and admission probabilities on the system's throughput and delay is thoroughly investigated. Results show that cooperation expands the stable throughput region. Moreover, numerical simulation results assert the extreme accuracy of the analytically derived delay expressions. In addition, we provide a criterion for the SU based on which it decides whether cooperation is beneficial to the PU or not. Furthermore, we show the impact of controlling the flow of data at the relay queue using the admission probability.

\end{abstract}

\begin{keywords}
Cognitive relaying, moment generating function, stable throughput region, average delay.
\end{keywords}

\IEEEpeerreviewmaketitle

\vspace{-2pt}

\section{Introduction}
\IEEEPARstart{T} \small{HE} extensive use of wireless communications recently collides with the shortage of resources required to establish communications. Spectrum scarcity coupled with the under-utilization of the licensed spectrum \cite{FCC} stimulated the introduction of the concept of cognitive radios \cite{Mitola,Haykin} aiming at exploiting the spectral holes. These holes are silence periods in which the spectrum is idle. The presence of such holes originates from the bursty nature of the sources, where the users who have legitimate access to the system, called primary users (PUs), do not always have data to transmit. That is why congitive radio networks have been gaining increasing worldwide interest. The main idea of cognitive radios resides in introducing cognitive secondary users (SUs) capable of sensing the spectrum and exploiting spectral holes for transmitting their packets. Thus, the spectral efficiency of the system is enhanced while simultaneously keeping the quality of service (QoS) requirements unviolated at the PUs \cite{akin2010effective}. 

Recently, cooperative communication in wireless networks has been widely investigated \cite{Tse,Kramer}. 
Cooperation has been made possible by the broadcast nature of wireless channels. As a result of such a nature, a single transmission can be received by different nodes within its range. Data lost over the direct link between a transmitter and its intended receiver is probably successfully received by a set of intermediate nodes. Each node in this set is considered a prospective relay that can deliver the lost data to their destinations. In \cite{Tse}, the authors outline several strategies employed by the cooperating radios including amplify-and-forward and decode-and-forward schemes. They develop performance characterizations in terms of outage events and associated outage probabilities.  
In \cite{Gamal}, cooperative transmission protocols for $N$ partners are proposed, where these protocols are evaluated using Zheng-Tse diversity-multiplexing tradeoff \cite{zheng2003diversity}. Sadek et al provided a symbol error rate analysis for decode and forward cooperation protocol in \cite{SER}. This analysis is used as a baseline for a relay selection mechanism developed and analyzed in \cite{ibrahim}. Cooperative communication can be also viewed as a way of implementing the notion of spatial diversity. Analogous to using multiple antennas to achieve spatial diversity in single communication links \cite{Foschini,Telatar}, the resources of multiple nodes can be exploited to induce a similar effect. Apparently, the previously listed works deal with cooperative communication from a physical layer prespective. However, we are interested in: $(i)$ employing cooperation at higher network layers and investigating its promises in terms of throughput and delay, $(ii)$ implementing cooperation in cognitive radio wireless networks.     

Incorporating cooperation into cognitive radio networks, the SUs not only seek idle time slots to transmit their own data, but they may also relay the PUs' lost packets. 
Thus, cooperation in cognitive radio networks can be viewed as a win-win situation. The SUs help the PUs deliver their packets to the destination. This helps in fulfilling the demand of the PUs and, hence, increasing the availability of slots in which SUs can transmit their own packets. For instance, in \cite{simeone2007stable}, power allocation at the SU, which has the capability of relaying the packets of the PU, is done with the objective of maximizing the stable throughput of the cognitive link for a fixed throughput selected by the primary link. In \cite{simeone2008spectrum}, the PU leases its own bandwidth for a fraction of time to a secondary network in exchange of appropriate gains attributed to cooperation. Multiple protocols are analyzed in \cite{Krikidis} which allow cooperation between a PU and a set of SUs. Perhaps an interesting point in \cite{Krikidis} is enabling simultaneous transmission of primary and secondary data using dirty-paper coding \cite{DPC}. In \cite{Sadek}, two protocols are developed and analyzed to implement cooperation in a system of $M$ source terminals, a single destination, and a single cognitive relay. Protocol-level cooperation is implemented in \cite{Ephremedis} among $N$ nodes in a wireless network, whereby each node is a source and a prospective relay at the same time. Performance gains in terms of stable throughput region and average delay are demonstrated.

In this paper, we consider a cognitive scenario in which the SU keeps two queues, one for its own packets and the other for the PU's relayed packets. Unlike the conventional relaying that assigns full priority to the relay queue, our prime objective is to develop a mathematical framework for the class of randomized cooperative policies that open room for accommodating cognitive radio systems supporting real-time, e.g., multimedia, and traffic with stringent QoS requirements, aka opportunistic real-time (ORT) \cite{Hossam}. Moreover, we take into account the QoS guarantees at the PU. 
Towards this objective, we propose and analyze a tunable randomized service cooperative policy with probabilistic relaying. According to the proposed policy, admission control is introduced at the relay queue, where a PU's packet that fails to reach the destination, is admitted to the relay queue with probability (w.p.) $p_{a}$ upon being successfully decoded by the SU. In addition, when the SU detects an idle time slot and decides to transmit, it serves either the queue of its own data w.p. $p_{q}$, or the relay queue w.p. $(1-p_{q})$. Consequently, we open room for trading the PU delay for enhanced SU delay and vice versa. Thus, the system could be tuned according to the demands of the intended applications running at both the PU and SU. Fundamental stable throughput and delay tradeoffs at both users are studied. The significance of the proposed policy lies in its tunability, whereby a variety of objectives could be realized via performing constrained optimizations over the degrees of freedom of the system represented by the admission probability to the relay queue, $p_{a}$, and the queue selection probability, $p_{q}$. Hereafter, we refer to the probabilistic queue selection by the term randomized service, while we refer to the admission control introduced at the relay queue by the term  probabilistic relaying. 

It is worth referring to the work done in \cite{pappas}, where a two-user cooperative scenario with admission control at the relay queue is considered. The authors are solely concerned with the derivation of the stable throughput region. Therefore, they do not differentiate between the two queues maintained by the cooperating terminal, i.e., the queue of own data and the queue of the relayed data. Unlike \cite{pappas}, we take into account the randomized service at the SU in the derivations of the stable throughput region. In addition, we provide a detailed analysis for the average delay encountered by the packets of both the PU and SU, which is out of the scope of \cite{pappas}.         
The main contributions of this work are summarized as follows:
\begin{enumerate}

\item We propose a randomized service cooperative policy with probabilistic relaying that enables trading the PU delay for enhanced SU delay and vice versa, depending on the application and system QoS constraints.

\item Under the proposed policy, the stable throughput region of the system is derived. 
Moreover, we derive closed-form expressions for the average delay experienced by the packets of both users. Furthermore, the effect of varying $p_{q}$ and $p_{a}$ on the system's throughput and delay is thoroughly investigated.

\item Extensive simulations are conducted to validate and, show the accuracy of, the obtained analytical results.
  
\item A fundamental tradeoff between the average delay and throughput of both users is studied and analyzed. At any given point within the stable throughput region of the system, we solve for the optimal values of $(p_{q},p_{a})$ that minimize the average delay for the PU and SU. Moreover, we study the tradeoff between the delays of the PU and SU, with emphasis on the role of $p_{a}$ in this tradeoff at different values of $p_{q}$.

\item We provide a criterion for the SU based on which it decides whether cooperation is beneficial to the PU or not. Also, we clearly define the gains of cooperation. In addition, we show the potential of using the admission probability as a flow regulator at the relay queue.
\end{enumerate}    

The rest of this paper is organized as follows. Section \ref{system model} presents the system model along with the implemented cooperation strategy. Section \ref{stable throughput region} presents the derivation of the stable throughput region of the system. The average delay characterization of the system is provided in section \ref{delay}. Numerical results are then presented in section \ref{numerical results}. Finally, a concluding discussion that summarizes the key insights and design guidelines inspired by our theoretical findings is presented in \ref{conclusion}.

\begin{figure}[t]
\begin{center}
\includegraphics[width=1\columnwidth , height=0.32\columnwidth]{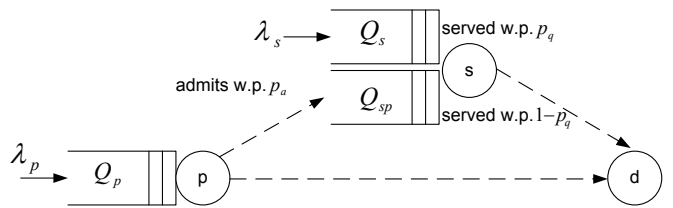}
\caption{Cognitive radio network under consideration.} \label{Fig1}
\end{center}
\vspace{-7mm}
\end{figure}

\section{System Model}\label{system model}
We consider the cognitive radio system shown in Fig. \ref{Fig1}. The system comprises a PU and a SU equipped with infinite capacity buffers, transmitting their packets to a common destination $d$. Time is slotted, and the transmission of a packet takes exactly one time slot. Source burstiness is taken into account through modelling the arrivals at the PU and SU as Bernoulli processes with average rates denoted by $\lambda_{p}$ and $\lambda_{s}$ (packets per slot), respectively, whereby the typical values of $\lambda_{p}$ and $\lambda_{s}$ lie in the interval $[0,1]$. The arrival processes at both users are independent of each other, and are independent and identically distributed (i.i.d) across time slots. 

The channel quality is determined by the average channel reception probability. It is the probability that a packet is decoded without error. A successful transmission requires receiving the entire packet without error, otherwise, the packet is discarded. The channel gain and noise processes are assumed stationary. Thus, the channel reception probability is a constant value between $0$ and $1$. Moreover, we assume that the SU performs perfect sensing. Thus, the system is contention-free, since at most one user is allowed to transmit in a given slot. Hence, the sole cause for packet loss is the channel impairments since no collisions are allowed. These channel impairments are typically caused by fading, shadowing, signal attenuation and additive noise. The event that causes packet loss is the channel outage event, which is characterized as the signal-to-noise ratio (SNR) at the receiving node being below a pre-specified threshold. This threshold is the minimum value of the SNR required by the receiver to perform an error-free decoding. Let $f_{pd}$, $f_{sd}$, and $f_{ps}$ denote the probability of success between the PU and destination, the SU and destination, and the PU and SU, respectively. It is assumed throughout the paper that $f_{pd}<f_{sd}$. Acknowledgements (ACKs) sent by the destination, and the SU for overheard primary packets, are assumed instantaneous and can be heard by all nodes error-free.

Next, we present the queueing model of the system followed by the description of the cooperation strategy.

\subsection{Queueing Model}
There are three queues involved in the system analysis, as shown in Fig. \ref{Fig1}. They are described as follows:
\begin{itemize}
\item $Q_{p}$: stores the packets of the PU corresponding to the exogenous Bernoulli arrival process with rate $\lambda_{p}$.

\item $Q_{sp}$: stores the packets at the SU, overheard from the PU.

\item $Q_{s}$: stores the packets of the SU corresponding to the exogenous Bernoulli arrival process with rate $\lambda_{s}$.  
\end{itemize}

The instantaneous evolution of the length of queue $k$ is captured as
\begin{align}\label{queue evolution}
Q_{k}^{t+1}= [Q_{k}^{t} - Y_{k}^{t}]^{+} + X_{k}^{t}, ~~k \in \{ p,sp,s \}
\end{align}  
where $[x]^{+}=\text{max}(x,0)$, and $Q_{k}^{t}$ denotes the number of packets in the $k$th queue at the beginning of the $t$th time slot. The binary random variables taking values either $0$ or $1$, $Y_{k}^{t}$ and $X_{k}^{t}$, denote the departures and arrivals corresponding to the $k$th queue in the $t$th time slot, respectively. 

\subsection{Cooperation Strategy}\label{cooperation strategy}
The proposed cooperative scheme is described as follows:
\begin{enumerate}
\item The PU transmits a packet whenever $Q_{p}$ is non empty.

\item If the packet is successfully decoded by the destination, it broadcasts an ACK that can be heard by both users in the network. Thus, the packet exits the system.

\item If the packet is not successfully received by the destination, yet, successfully decoded by the SU, $Q_{sp}$ either buffers the packet w.p. $p_{a}$ or discards it w.p. $(1-p_{a})$. This constitutes the probabilistic relaying admission policy. 

\item If the packet is buffered in $Q_{sp}$, the SU sends back an ACK to announce successful reception of the PU's packet. Therefore, the packet is dropped from $Q_{p}$ and becomes the responsibility of the SU to deliver to the destination.

\item If the packet is neither successfully received by the destination nor decoded by the SU and admitted to $Q_{sp}$, it is kept at $Q_{p}$ for retransmission in the next time slot.
 
\item When the PU is idle, the SU transmits a packet from either $Q_{s}$ or $Q_{sp}$ w.p. $p_{q}$ and ($1-p_{q}$), respectively.

\item If the packet is successfully decoded by the destination, it sends back an ACK and the packet exits the system. Otherwise, it is kept at its queue for later retransmission. 

\end{enumerate}

It is worth noting from the description of the proposed policy that the system at hand is non work-conserving. A system is considered work-conserving if it does not idle whenever it has packets \cite{Wolff}. However, in our system, one case violates this condition, which arises when the SU detects a slot in which the PU is idle, and it randomly selects to transmit a packet from one of its queues which turns out to be empty, while the other queue is non-empty. Accordingly, the slot would go idle and be wasted despite the system having packets awaiting transmission. Clearly, this results in a degradation in the system performance. Nevertheless, we can extend it to a more flexible work-conserving version of the proposed policy that exploits the resources efficiently without the risk of wasting slots. However, its delay analysis is notoriously complex since it involves deriving the moment generating function of the joint lengths of the three queues in the system. Thus, we resort to the non work-conserving policy for its mathematical tractability. Consequently, we derive closed-form expressions for the expected packet delay, formulate and solve, analytically, optimization problems with the objective of minimizing delay at both users.   

\section{Stable Throughput Region}\label{stable throughput region}
The system is considered stable when all of its queues are stable. Queue stability is loosely defined as having a bounded queue size, i.e., the number of packets in the queue does not grow to infinity \cite{Sadek}. In this section, we characterize the stable throughput region of the system. Moreover, we distill valuable insights related to the effect of tuning the system parameters, $(p_{q},p_{a})$, on the stability region of the system.

\begin{theorem}\label{Thm1}
The stable throughput region for the system in Fig. \ref{Fig1} under the proposed randomized service policy with probabilistic relaying, for a fixed value of $(p_{q},p_{a})$, is given by
\begin{align}\label{stable throughput region for a given value of a}
\mathbf{R}=\bigg\{& (\lambda_{p},\lambda_{s}): \lambda_{s}<p_{q}f_{sd} \left[ 1 - \frac{\lambda_{p}}{f_{pd}+p_{a}f_{ps}(1-f_{pd})} \right], \notag \\ 
&\text{for}~ \lambda_{p}< \frac{f_{sd}(1-p_{q})[f_{pd} + p_{a}f_{ps}(1-f_{pd})]}
{f_{sd}(1-p_{q})+p_{a}f_{ps}(1-f_{pd})}  \bigg\}
\end{align}
\end{theorem}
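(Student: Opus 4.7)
The plan is to characterize the stable throughput region by applying Loynes' theorem to each of the three queues $Q_p$, $Q_{sp}$, $Q_s$ separately, observing that the randomized service discipline at the SU effectively decouples $Q_s$ and $Q_{sp}$ in terms of their mean service rates.

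First I would analyze $Q_p$, which is the simplest since the PU transmits whenever its queue is non-empty, independent of the SU's state. A packet leaves $Q_p$ either when the destination decodes it (probability $f_{pd}$), or when the destination fails but the SU decodes it and admits it to the relay queue (probability $(1-f_{pd})f_{ps}p_a$). Hence the service rate is $\mu_p = f_{pd}+p_a f_{ps}(1-f_{pd})$, and Loynes gives stability iff $\lambda_p<\mu_p$. A standard utilization argument then yields $\Pr(Q_p\neq 0)=\lambda_p/\mu_p$ in steady state.

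Next I would exploit the crucial structural observation that, because the SU picks between $Q_s$ and $Q_{sp}$ via an independent coin flip with bias $p_q$ (not based on queue occupancies), each SU queue sees a thinned version of the PU's idle slots and so their mean service rates can be computed independently via the saturation (always-non-empty) viewpoint used in Loynes' theorem. Thus $\mu_s=(1-\lambda_p/\mu_p)p_q f_{sd}$ gives the condition on $\lambda_s$ in \eqref{stable throughput region for a given value of a} after substituting $\mu_p$. For $Q_{sp}$, I would compute its arrival rate as $\lambda_{sp}=\Pr(Q_p\neq 0)\cdot p_a f_{ps}(1-f_{pd})=\lambda_p p_a f_{ps}(1-f_{pd})/\mu_p$ and its service rate as $\mu_{sp}=(1-\lambda_p/\mu_p)(1-p_q)f_{sd}$. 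Imposing $\lambda_{sp}<\mu_{sp}$ and solving for $\lambda_p$ should recover precisely the bound on $\lambda_p$ in the theorem, and a quick check confirms that this bound is tighter than $\lambda_p<\mu_p$, so the $Q_p$-stability condition is subsumed.

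The main conceptual obstacle is justifying the decoupling between $Q_s$ and $Q_{sp}$: they share a single server (the SU) and are coupled through the event that the PU is idle, so a naive Loynes application needs defense. The cleanest route is a Rao–Ephremides style dominant-system argument, constructing a dominant system in which the queue not under study always transmits dummy packets whenever selected; because the SU's selection probability $p_q$ is state-independent, the dominant and original systems are indistinguishable in steady state from the viewpoint of the queue under study, so stability in one implies stability in the other. A secondary subtlety is the non-work-conserving feature (a slot can be wasted when the SU selects an empty queue), but since each queue's service rate is computed under the saturation assumption of Loynes' theorem, the wasted slots do not alter the rate calculations. Once both Loynes conditions are combined, the region \eqref{stable throughput region for a given value of a} follows directly.
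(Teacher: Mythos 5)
Your proposal is correct and follows essentially the same route as the paper's own proof: Loynes' theorem applied per queue, with $\mu_p = f_{pd}+p_a f_{ps}(1-f_{pd})$, arrival rate $\lambda_p p_a f_{ps}(1-f_{pd})/\mu_p$ and service rate $(1-\lambda_p/\mu_p)(1-p_q)f_{sd}$ at $Q_{sp}$, service rate $(1-\lambda_p/\mu_p)p_q f_{sd}$ at $Q_s$, and the observation that the $Q_{sp}$ condition subsumes $\lambda_p<\mu_p$. Your additional dominant-system justification for decoupling the two SU queues is a rigor-enhancing supplement the paper leaves implicit, but it does not change the approach.
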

\begin{proof}
We use Loynes' theorem \cite{Loynes} to establish the stability of each queue. The theorem states that if the arrival and the service processes of a queue are stationary, then the queue is stable if and only if the arrival rate is strictly less than the service rate. 

\begin{itemize}
\item For $Q_{p}$ stability, the following condition must be satisfied
\begin{align}\label{Q1}
\lambda_{p} < \mu_{p}
\end{align}
where $\mu_{p}$ denotes the service rate of $Q_{p}$.
A packet departs $Q_{p}$ if it is successfully received by the destination or is decoded by the SU and is admitted to its relay queue.
Thus, $\mu_{p}$ is given by
\begin{align}\label{mu_p}
\mu_{p}\! = \! 1\! - \! (1\! - \! f_{pd})(1\! - \! p_{a}f_{ps})\! =\! f_{pd} \! + \! p_{a}f_{ps}(1-f_{pd})
\end{align} 
\item For $Q_{sp}$ stability, the following condition must be satisfied
\begin{align}\label{Q21}
p_{a}f_{ps}(1-f_{pd})\frac{\lambda_{p}}{\mu_{p}}<\left[ 1-\frac{\lambda_{p}}{\mu_{p}} \right](1-p_q)f_{sd}
\end{align}
A PU's packet is buffered at $Q_{sp}$ if an outage occurs in the link between the PU and the destination which happens w.p. ($1-f_{pd}$), yet, no outage occurs in the link between the PU and the SU which happens w.p. $f_{ps}$, and the packet is admitted to $Q_{sp}$ which occurs w.p. $p_{a}$, while $Q_{p}$ is not empty which has a probability of $ \lambda_{p} / \mu_{p} $. This explains the left hand side of (\ref{Q21}) which is the rate of packet arrivals to the SU relay queue. The right hand side represents the service rate seen by the packets of $Q_{sp}$. A packet departs the relay queue if $Q_{p}$ is empty, $Q_{sp}$ is selected to transmit a packet, and there is no outage in the link between the SU and the destination. 
Rearranging the terms of the above equation yields the following condition on the maximum achievable arrival rate at the PU
\begin{align}\label{Q21_modified}
\lambda_{p}< \left[ \frac{f_{sd}(1-p_{q})}{f_{sd}(1-p_{q})+p_{a}f_{ps}(1-f_{pd})} \right]\mu_{p} 
\end{align} 
Comparing (\ref{Q1}) and (\ref{Q21_modified}), it becomes clear that (\ref{Q21_modified}) provides a tighter bound on $\lambda_{p}$ due to the multiplication of $\mu_{p}$ by a term which is less than one.

\item For $Q_{s}$ stability, the following condition must be satisfied
\begin{align}\label{Q22}
\lambda_{s}<p_{q}f_{sd} \left[ 1 - \frac{\lambda_{p}}{\mu_{p}} \right]
\end{align}
Using the same rationale, a packet departs $Q_{s}$ if $Q_{p}$ is empty, $Q_{s}$ is selected to transmit a packet, and there is no outage in the link between the SU and the destination. This explains the service rate seen by the packets of $Q_{s}$ given in the right hand side of (\ref{Q22}).
\end{itemize}   

The stability conditions given by (\ref{Q21_modified}) and (\ref{Q22}) establish the result in (\ref{stable throughput region for a given value of a}). 
\end{proof}

Next, we study and analyze the sensitivity of the stable throughput region of the system to changes in both $p_{q}$ and $p_{a}$. We begin first by investigating the effect of varying $p_{q}$ while keeping $p_{a}$ constant, followed by the other way round, i.e., varying $p_{a}$ while keeping $p_{q}$ fixed. 

\begin{lemma}\label{Lemma1}
The maximum achievable arrival rate at the PU, $\lambda_{p}$, decreases monotonically with $p_{q}$. Conversely, for a fixed $\lambda_{p}$, the maximum achievable arrival rate at the SU, $\lambda_{s}$, increases monotonically with $p_{q}$.
\end{lemma}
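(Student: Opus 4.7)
The plan is to read both monotonicity statements directly off the two boundary expressions already derived in Theorem~\ref{Thm1}, namely
\begin{align*}
\lambda_p^{\max}(p_q) &= \frac{f_{sd}(1-p_q)\,\mu_p}{f_{sd}(1-p_q)+p_a f_{ps}(1-f_{pd})}, \\
\lambda_s^{\max}(p_q,\lambda_p) &= p_q f_{sd}\left[1-\frac{\lambda_p}{\mu_p}\right],
\end{align*}
where $\mu_p = f_{pd}+p_a f_{ps}(1-f_{pd})$ is independent of $p_q$. Since both bounds are continuously differentiable in $p_q$ on $(0,1)$, establishing strict monotonicity amounts to checking the sign of a single derivative in each case.

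First I would handle the claim for $\lambda_p$. Setting $A(p_q)=f_{sd}(1-p_q)$ and $B=p_a f_{ps}(1-f_{pd})$, I rewrite $\lambda_p^{\max}(p_q)=\mu_p/(1+B/A(p_q))$. Because $A$ is strictly decreasing in $p_q$ and $B>0$ is constant, the ratio $B/A$ strictly increases, hence the reciprocal envelope strictly decreases. Equivalently, a direct quotient-rule computation gives
\begin{equation*}
\frac{d\lambda_p^{\max}}{dp_q} = -\frac{f_{sd}\,\mu_p\,p_a f_{ps}(1-f_{pd})}{\bigl[f_{sd}(1-p_q)+p_a f_{ps}(1-f_{pd})\bigr]^2}<0,
\end{equation*}
which proves the first assertion. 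An intuitive reading I would include is that increasing $p_q$ biases the SU toward serving its own queue rather than the relay queue, which throttles the effective service rate seen by primary packets routed through $Q_{sp}$.

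For the second claim, I would exploit the key structural fact that $\mu_p$ depends only on $p_a$, $f_{pd}$, and $f_{ps}$, not on $p_q$. With $\lambda_p$ held fixed, the bracket $\bigl[1-\lambda_p/\mu_p\bigr]$ is therefore a strictly positive constant (positivity being ensured by the stability constraint~(\ref{Q1})), so $\lambda_s^{\max}(p_q,\lambda_p)$ is a linear function of $p_q$ with strictly positive slope $f_{sd}\bigl[1-\lambda_p/\mu_p\bigr]$. Monotonic increase is immediate. I do not foresee any genuine obstacle; the only subtlety worth flagging explicitly in the write-up is the independence of $\mu_p$ from $p_q$, since this is what decouples the two monotonicity statements and underlies the PU/SU delay tradeoff that motivates the rest of the paper.
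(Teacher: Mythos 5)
Your proposal is correct and follows essentially the same route as the paper: differentiating the boundary expression (\ref{Q21_modified}) w.r.t.\ $p_{q}$ to obtain the identical negative derivative, and observing that the slope $f_{sd}\left[1-\lambda_{p}/\mu_{p}\right]$ of (\ref{Q22}) is positive by the stability condition (\ref{Q1}). The rewriting of the PU boundary as $\mu_{p}/(1+B/A(p_{q}))$ is a small stylistic addition, but the substance of the argument is the same.
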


\begin{proof}
From the system stability conditions, the maximum achievable $\lambda_{p}$, that defines the boundary of the stable throughput region for a given $(p_{q},p_{a})$, is given by (\ref{Q21_modified}). Taking the derivative of (\ref{Q21_modified}) with respect to (w.r.t.) $p_{q}$ yields 
\begin{equation}\label{lambda_p vs p1}
\frac{\partial \lambda_{p}}{\partial p_{q}}=\frac{-p_{a}f_{sd}f_{ps}(1-f_{pd})\mu_{p}}
{[f_{sd}(1-p_{q})+p_{a}f_{ps}(1-f_{pd})]^{2}}
\end{equation}  
Since $p_a$, $f_{sd}$, $f_{ps}$, $f_{pd}$, and $\mu_{p}$ are all positive numbers less than one, we conclude from (\ref{lambda_p vs p1}) that $\frac{\partial \lambda_{p}}{\partial p_{q}}$ is negative definite irrespective of the choice of $p_{a}>0$. This establishes that the maximum achievable $\lambda_{p}$ monotonically decreases with $p_{q}$.

On the other hand, for a fixed $\lambda_{p}$, the maximum achievable $\lambda_{s}$, that defines the boundary of the stable throughput region for a given value of $(p_{q},p_{a})$, is given by (\ref{Q22}). Taking the derivative of (\ref{Q22}) w.r.t. $p_{q}$ yields
\begin{equation}\label{lambda_s vs p1}
\frac{\partial \lambda_{s}}{\partial p_{q}}=f_{sd} \left[ 1- \frac{\lambda_{p}}{\mu_{p}} \right]
\end{equation}
The stability condition provided in (\ref{Q1}) guarantees that the utilization factor of $Q_{p}$, $\frac{\lambda_{p}}{\mu{p}}$, is less than one. Thus, it can be obviously seen from (\ref{lambda_s vs p1}) that $\frac{\partial \lambda_{s}}{\partial p_{q}}$ is positive definite irrespective of the choice of $p_{a}$. This establishes that, for a fixed $\lambda_{p}$, the maximum achievable $\lambda_{s}$ monotonically increases with $p_{q}$.
\end{proof}

\begin{lemma}\label{Lemma2}
The maximum achievable arrival rate at the PU, $\lambda_{p}$, increases monotonically with $p_{a}$ if $p_{q}$ lies in the interval $\left( 0,1-\frac{f_{pd}}{f_{sd}}\right)$, and decreases monotonically with $p_{a}$ if $p_{q}$ lies in the interval $\left(1-\frac{f_{pd}}{f_{sd}},1\right)$. However, for a fixed $\lambda_{p}$, the maximum achievable arrival rate at the SU, $\lambda_{s}$, increases monotonically with $p_{a}$, irrespective of the choice of $p_{q}$.   
\end{lemma}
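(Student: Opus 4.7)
The plan is to mimic the structure of the proof of Lemma~\ref{Lemma1}: use the two stability-boundary expressions (\ref{Q21_modified}) and (\ref{Q22}) together with the formula (\ref{mu_p}) for $\mu_p$, compute partial derivatives with respect to $p_a$, and read off monotonicity from the sign of each derivative. Since $\mu_p$ itself depends on $p_a$, the key technical difference from Lemma~\ref{Lemma1} is that the bound on $\lambda_p$ in (\ref{Q21_modified}) is a \emph{ratio} of two $p_a$-dependent affine functions, so the sign analysis is slightly more delicate than before.

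For the first part (the claim about $\lambda_p$), I would substitute (\ref{mu_p}) into (\ref{Q21_modified}) and, writing $A := f_{sd}(1-p_q)$ and $B := f_{ps}(1-f_{pd})$ for brevity, express the boundary as
\begin{equation*}
\lambda_p^{\max}(p_a)=\frac{A\bigl(f_{pd}+p_a B\bigr)}{A+p_a B}.
\end{equation*}
A direct application of the quotient rule, after the cancellation of the $p_a B$ cross-terms in the numerator, should yield
\begin{equation*}
\frac{\partial \lambda_p^{\max}}{\partial p_a}
=\frac{A\,B\,(A-f_{pd})}{(A+p_a B)^2}.
\end{equation*}
Because $A,B>0$ and the denominator is strictly positive, the sign of this derivative coincides with the sign of $A-f_{pd}=f_{sd}(1-p_q)-f_{pd}$. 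Solving $A-f_{pd}\gtrless 0$ for $p_q$ produces the threshold $p_q=1-f_{pd}/f_{sd}$, which lies in $(0,1)$ by the standing assumption $f_{pd}<f_{sd}$. This immediately partitions the $p_q$-axis into the two intervals stated in the lemma and gives the claimed monotonic behaviour on each side.

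For the second part (the claim about $\lambda_s$), I would note from (\ref{Q22}) that the $p_a$-dependence enters only through $\mu_p$. Differentiating (\ref{Q22}) with $\lambda_p$ held fixed yields
\begin{equation*}
\frac{\partial \lambda_s^{\max}}{\partial p_a}
=\frac{p_q\,f_{sd}\,\lambda_p}{\mu_p^{\,2}}\;\frac{\partial \mu_p}{\partial p_a},
\end{equation*}
and from (\ref{mu_p}) we have $\partial \mu_p/\partial p_a = f_{ps}(1-f_{pd})>0$. Hence the derivative is strictly positive for every admissible $p_q\in(0,1]$ and every $\lambda_p>0$, independently of $p_q$, which establishes the monotonic increase.

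I do not foresee a significant obstacle: both parts reduce to a single-variable sign computation after the right substitution. The only point requiring care is the intermediate algebraic simplification in the $\lambda_p$ derivative, which must be carried out cleanly enough to expose the factor $A-f_{pd}$; once this factor is isolated, the assumption $f_{pd}<f_{sd}$ does the remaining work and guarantees that the threshold $1-f_{pd}/f_{sd}$ is a genuine interior point of $(0,1)$, so that the interval decomposition of the statement is non-degenerate.
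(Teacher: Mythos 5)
Your proposal is correct and takes essentially the same approach as the paper: differentiate the stability boundaries (\ref{Q21_modified}) and (\ref{Q22}) with respect to $p_a$, isolate the factor $f_{sd}(1-p_{q})-f_{pd}$ to obtain the threshold $p_{q}=1-\frac{f_{pd}}{f_{sd}}$ for the $\lambda_p$ claim, and observe positivity of the derivative for the $\lambda_s$ claim. As a minor remark, your $\lambda_s$ derivative $\frac{p_{q}f_{sd}\lambda_{p}f_{ps}(1-f_{pd})}{\mu_{p}^{2}}$ is in fact the correct one; the paper's expression (\ref{lambda_s vs p2}) contains a spurious factor of $p_a$ in the numerator, a typo that does not affect the sign conclusion.
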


\begin{proof}
Towards proving this result, we follow the same footsteps of the proof of Lemma \ref{Lemma1}. 
Taking the derivative of (\ref{Q21_modified}) w.r.t. $p_{a}$ yields 
\begin{equation}\label{lambda_p vs p2}
\frac{\partial \lambda_{p}}{\partial p_{a}}=(1-p_{q})(1-f_{pd})f_{ps}f_{sd}
\frac{  f_{sd}(1-p_{q})-f_{pd} }
{\left[ (1-p_{q})f_{sd}+p_{a}f_{ps}(1-f_{pd}) \right]^{2}}
\end{equation}  
Since $p_q$, $f_{pd}$, $f_{ps}$, and $f_{sd}$ are all positive numbers less than one, we conclude from (\ref{lambda_p vs p2}) that the behaviour of the maximum achievable $\lambda_{p}$ is governed by the term $f_{sd}(1-p_{q})-f_{pd}$. Solving for the value of $p_{q}$ that renders the maximum achievable $\lambda_{p}$ insensitive to variations in $p_{a}$, i.e., $\frac{\partial \lambda_{p}}{\partial p_{a}}=0$, we get
\begin{equation}\label{cut off}
p_{q}=1-\frac{f_{pd}}{f_{sd}}
\end{equation}
Evidently, it can be seen that if $p_{q}<1-\frac{f_{pd}}{f_{sd}}$, then $\frac{\partial \lambda_{p}}{\partial p_{a}}$ becomes positive, which implies that the maximum achievable $\lambda_{p}$ increases monotonically with $p_{a}$. On the other hand, $\frac{\partial \lambda_{p}}{\partial p_{a}}$ is negative, implying that the maximum achievable $\lambda_{p}$ decreases monotonically with $p_{a}$, if $p_{q}>1-\frac{f_{pd}}{f_{sd}}$.

At the SU side, taking the derivative of (\ref{Q22}) w.r.t. $p_{a}$ yields
\begin{equation}\label{lambda_s vs p2}
\frac{\partial \lambda_{s}}{\partial p_{a}}=
\frac{p_{q}p_{a}f_{sd}f_{ps}(1-f_{pd})\lambda_{p}}{\left[ f_{pd}+p_{a}f_{ps}(1-f_{pd}) \right]^{2}}
\end{equation} 
Since $p_q$, $p_{a}$, $f_{pd}$, $f_{ps}$, $f_{sd}$, and $\lambda_{p}$ are all positive numbers less than one, we conclude from (\ref{lambda_s vs p2}) that $\frac{\partial \lambda_{s}}{\partial p_{a}}$ is always positive definite, irrespective of the choice of $(p_{q},p_{a})$. Thus, it has been established that, for a fixed $\lambda_{p}$, the maximum achievable $\lambda_{s}$ increases monotonically with $p_{a}$.  
\end{proof}

Figs. \ref{Fig2} and \ref{Fig3} illustrate the results obtained in Lemmas \ref{Lemma1} and \ref{Lemma2}. 
In an attempt to check how the stable throughput region behaves in response to variations in $p_{q}$,
we plot in Fig. \ref{Fig2} the stable throughput region of the system under the proposed policy fixing $p_{a}=1$ and varying $p_{q}$. 
Hereafter, the system parameters are chosen as follows: $f_{pd}=0.3$, $f_{ps}=0.4$, and $f_{sd}=0.8$, unless otherwise stated. According to this figure, we depict the effect of the probability $p_{q}$ on the stability region of the system. It can be realized that increasing the value of $p_{q}$ decreases the maximum achievable arrival rate at the PU, $\lambda_{p}$. On the contrary, increasing $p_{q}$ results in an increase in the maximum achievable arrival rate at the SU, $\lambda_{s}$, for every feasible $\lambda_{p}$. This result is intuitive, since increasing the value of $p_{q}$ gives more chance for transmitting the SU own packets as opposed to the PU's relayed packets. This, in turn, reduces the degree of cooperation the PU experiences from the SU and, hence, the maximum achievable $\lambda_{p}$ decreases. On the other hand, since the SU own packets are more likely to be transmitted, the system can sustain higher values of $\lambda_{s}$. Thus, we conclude that increasing $p_{q}$ is always in favor of the SU as opposed to the PU. 

\begin{figure}[t]
\begin{center}
\includegraphics[width=1\columnwidth , height=0.68\columnwidth]{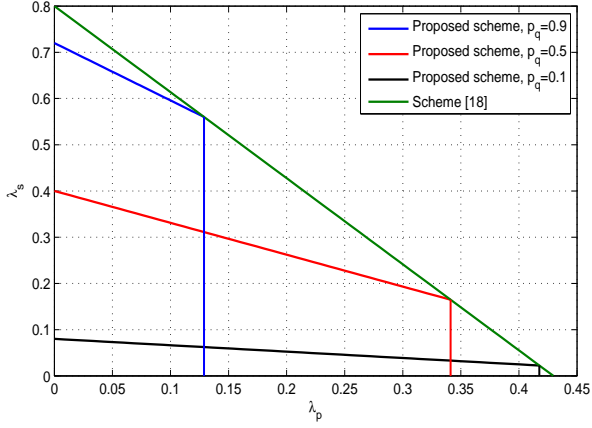}
\caption{Stable throughput region at $p_{a}=1$ for different values of $p_{q}$.} \label{Fig2}
\end{center}
\vspace{-7mm}
\end{figure}
In Fig. \ref{Fig3}, we examine the behavior of the stable throughput region of the system in response to variations in $p_{a}$ at the ``phase transition'' value of $p_{q}$ provided in (\ref{cut off}) which equals $0.625$ for the previously given values for $f_{pd}$, $f_{ps}$ and $f_{sd}$. It can be noticed from the figure that the maximum achievable $\lambda_{p}$ is insensitive to variations in $p_{a}$, i.e., it is constant for all values of $p_{a}$. However, at a fixed $\lambda_{p}$, the maximum achievable $\lambda_{s}$ increases with the increase of $p_{a}$. 

The result obtained in Lemma \ref{Lemma2} is further clarified via Figs. \ref{Fig4} and \ref{Fig5}, where we plot the maximum achievable $\lambda_{p}$ given by (\ref{Q21_modified}) and the maximum achievable $\lambda_{s}$ given by (\ref{Q22}) versus $p_{a}$, respectively, at different values of $p_{q}$. It is shown in Fig. \ref{Fig4} that the maximum achievable $\lambda_{p}$ increases monotonically with $p_{a}$ as long as $p_{q}<1-\frac{f_{pd}}{f_{sd}}$. Conversely, it decreases monotonically with $p_{a}$ for $p_{q}>1-\frac{f_{pd}}{f_{sd}}$ while remaining constant at $p_{q}=1-\frac{f_{pd}}{f_{sd}}$. Perhaps an intuitive explanation for this behavior is the following: If the channel quality between the PU and the destination is much worse than that between the SU and the destination, i.e., $f_{pd}<<f_{sd}$, then over almost the entire range of $p_{q} \in (0,1)$, the PU's throughput is enhanced via cooperation, i.e., having more packets getting relayed by the SU enhances the PU's throughput. However, if the channel between the PU and the destination is at least as good as the channel between the SU and the destination, i.e., $f_{pd}\simeq f_{sd}$, then it is always in the interest of the PU to retransmit its lost packets rather than getting them relayed via the SU, i.e., rejecting more packets at $Q_{sp}$ enhances the PU's throughput. Another interesting way of explaining this result comes through rearranging (\ref{cut off}). The PU benefits from cooperation as long as $(1-p_{q})f_{sd}>f_{pd}$, that is, the success probability over the relay-destination link is greater than that of the PU-destination link.  

Back to Fig. \ref{Fig4}, it can be noticed that at a fixed $p_{a}$, the system can sustain higher values of $\lambda_{p}$ at lower values of $p_{q}$, which is the result obtained in Lemma \ref{Lemma1} and shown in Fig. \ref{Fig2}. In addition, one can notice that the degradation in the PU's throughput with the increase of $p_{q}$ decreases at lower values of $p_{a}$. These results stimulate thinking of $p_{a}$ as an effective parameter that could be tuned to tailor the performance of the system to the demands of the intended application. 

\begin{figure}[t]
\begin{center}
\includegraphics[width=1\columnwidth , height=0.68\columnwidth]{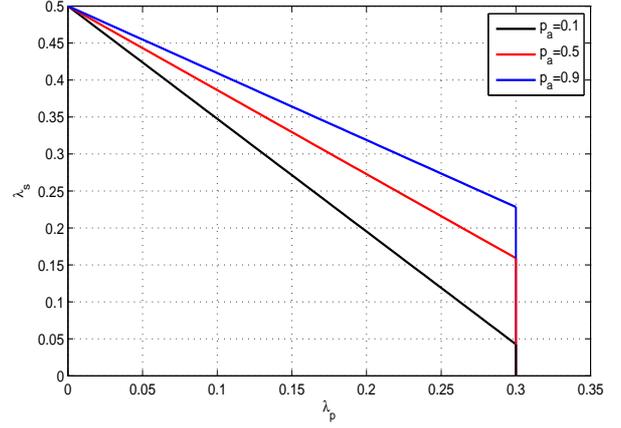}
\caption{Stable throughput region at $p_{q}=1-\frac{f_{pd}}{f_{sd}}$ for different values of $p_{a}$.} \label{Fig3}
\end{center}
\vspace{-7mm}
\end{figure}
On the other hand, we plot in Fig. \ref{Fig5} the maximum achievable $\lambda_{s}$ versus $p_{a}$ at a fixed $\lambda_{p}$ chosen to be $0.2$. It can be depicted that at a fixed $\lambda_{p}$, the maximum achievable $\lambda_{s}$ increases monotonically with $p_{a}$ independent of the choice of $p_{q}$. Therefore, it is clear that the SU is always benefiting from increasing $p_{a}$. This is attributed to the increase in the availability of time slots in which the PU's queue is empty, since at higher values of $p_{a}$, more packets are enqueued in $Q_{sp}$ and, hence, dropped from $Q_{p}$. Thus, the SU's packets are more likely to be transmitted. Moreover, at a fixed $p_{a}$, it can be seen from Fig. \ref{Fig5} that the system sustains higher values of $\lambda_{s}$ at higher values of $p_{q}$, which again emphasizes the result obtained in Lemma \ref{Lemma1} and shown in Fig. \ref{Fig2}.        

\begin{figure}[t]
\begin{center}
\includegraphics[width=1\columnwidth , height=0.68\columnwidth]{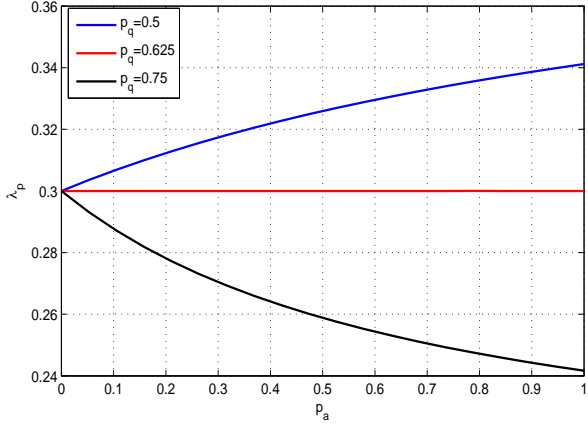}
\caption{Maximum achievable $\lambda_{p}$ versus $p_{a}$ for different values of $p_{q}$.} \label{Fig4}
\end{center}
\vspace{-7mm}
\end{figure}
 
After thoroughly investigating the effect of $p_{q}$ and $p_{a}$ on the stability region of the system, we present next a complete characterization of the stable throughput region of the system under the proposed policy by taking the union of (\ref{stable throughput region for a given value of a}) over all possible values of $(p_{q},p_{a})$.   
\begin{theorem}
The union of the stability regions given by (\ref{stable throughput region for a given value of a}) over all possible values of $(p_{q},p_{a})$ is the same as that of any work-conserving cooperative scheme, e.g., the one derived in \cite{Ephremedis}, and is given by 
\begin{equation}
\lambda_{s} < f_{sd}- \left[ \frac{f_{sd}+f_{ps}(1-f_{pd})}{f_{pd}+f_{ps}(1-f_{pd})} \right] \lambda_{p}
\end{equation}
\end{theorem}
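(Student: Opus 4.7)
The plan is to directly compute the union of the stability regions (\ref{stable throughput region for a given value of a}) over $(p_q,p_a)\in[0,1]^2$ and then match the resulting envelope against the region reported for the work-conserving cooperative scheme in \cite{Ephremedis}. First, I would fix $\lambda_p$ and examine how the $\lambda_s$-bound in (\ref{Q22}), namely $p_q f_{sd}(1-\lambda_p/\mu_p)$ with $\mu_p=f_{pd}+p_a f_{ps}(1-f_{pd})$, responds to $p_a$. Because $\mu_p$ is strictly increasing in $p_a$, both the $\lambda_s$-bound and the feasibility window on $\lambda_p$ from (\ref{Q21_modified}) enlarge as $p_a\uparrow 1$. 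Hence, for tracing the outer boundary of the union, it suffices to take $p_a=1$.

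Next, with $p_a=1$, set $\mu:=f_{pd}+f_{ps}(1-f_{pd})$. Rearranging the feasibility constraint (\ref{Q21_modified}) yields an explicit upper bound on $p_q$ as a function of $\lambda_p$,
\[
p_q<1-\frac{\lambda_p\, f_{ps}(1-f_{pd})}{f_{sd}\,(\mu-\lambda_p)}.
\]
Since the right-hand side of (\ref{Q22}) is linear and strictly increasing in $p_q$, its supremum over feasible $p_q$ is attained as $p_q$ approaches this upper value. Substituting this $p_q$ into (\ref{Q22}) and collecting terms, routine algebra collapses the expression to
\[
\lambda_s = f_{sd}-\left[\frac{f_{sd}+f_{ps}(1-f_{pd})}{f_{pd}+f_{ps}(1-f_{pd})}\right]\lambda_p,
\]
which is the claimed envelope. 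The union is therefore the half-plane below this line intersected with $\lambda_p\in[0,\mu)$.

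To finish, I would quote the explicit stable throughput region derived in \cite{Ephremedis} for the same physical setup under a work-conserving cooperative policy and verify that, with the success probabilities $(f_{pd},f_{sd},f_{ps})$ of the present model, it reduces term-by-term to the line above; the two thus coincide as sets.

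The main obstacle I anticipate is justifying global optimality of the $p_a=1$ choice in light of Lemma \ref{Lemma2}, which shows that the maximum achievable $\lambda_p$ is non-monotonic in $p_a$ across the two regimes $p_q \lessgtr 1-f_{pd}/f_{sd}$. One must argue that when $\lambda_s$ is being maximized, the binding constraint always sits in the regime where increasing $p_a$ simultaneously helps both users, and check the endpoints $p_q\to 0$ and $p_q\to 1$ (and the phase-transition point itself) to confirm that no boundary point of the union is missed by the one-parameter sweep used above.
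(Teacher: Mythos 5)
Your core computation is correct and follows essentially the paper's route: fix $p_a=1$, eliminate $p_q$ using the boundary of (\ref{Q21_modified}), substitute into (\ref{Q22}), and the algebra collapses to the stated line. (The paper does the same, except that it delegates the optimization over $p_q$ at $p_a=1$ to \cite{Ashour} instead of carrying out the substitution, and it handles the comparison with \cite{Ephremedis} by citation, as you do.) Your explicit elimination of $p_q$ is, if anything, more self-contained than the paper's.

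The gap is exactly where you suspect, and the repair you sketch would fail. Your opening claim that ``both the $\lambda_s$-bound and the feasibility window on $\lambda_p$ enlarge as $p_a\uparrow 1$'' is false for the feasibility window: by Lemma \ref{Lemma2} itself, the right-hand side of (\ref{Q21_modified}) is \emph{decreasing} in $p_a$ whenever $p_q>1-f_{pd}/f_{sd}$. You flag this, but your proposed resolution --- that the binding constraint at the optimum always sits in the regime where increasing $p_a$ helps both users --- is also false. At a fixed $p_a$, the maximizing queue-selection probability is
\begin{equation*}
p_q^{*}(p_a)=1-\frac{\lambda_p\,p_a f_{ps}(1-f_{pd})}{f_{sd}\left[\mu(p_a)-\lambda_p\right]},
\qquad \mu(p_a)=f_{pd}+p_a f_{ps}(1-f_{pd}),
\end{equation*}
and comparing $p_q^{*}(p_a)$ with the threshold $1-f_{pd}/f_{sd}$ reduces, after clearing denominators, to comparing $\lambda_p$ with $f_{pd}$: the optimum lies in the ``good'' regime only when $\lambda_p>f_{pd}$. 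For every $\lambda_p<f_{pd}$ (a nonempty strip of the region) your argument breaks down. The clean fix is to swap the order of optimization: for each fixed $p_a$, substitute $p_q^{*}(p_a)$ into (\ref{Q22}) to obtain the per-$p_a$ envelope
\begin{equation*}
\lambda_s^{\max}(p_a)=f_{sd}-\lambda_p\,\frac{f_{sd}+p_a f_{ps}(1-f_{pd})}{f_{pd}+p_a f_{ps}(1-f_{pd})},
\end{equation*}
and then observe that the fraction is decreasing in $p_a$ (its derivative with respect to $u=p_a f_{ps}(1-f_{pd})$ has the sign of $f_{pd}-f_{sd}<0$). Hence $\lambda_s^{\max}(p_a)$ increases in $p_a$ in \emph{both} regimes, the union boundary is attained at $p_a=1$, and the claimed line follows. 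Note that the paper's own appeal to Lemma \ref{Lemma2} glosses over the same subtlety, since that lemma's $\lambda_s$-statement is a fixed-$p_q$ derivative of (\ref{Q22}) and ignores that the feasible set of $p_q$ shrinks as $p_a$ grows; the reversed-order argument above closes the hole for both proofs and renders the citation of \cite{Ashour} unnecessary.
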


\begin{proof}
The stable throughput region of the system for a fixed value of the pair $(p_{q},p_{a})$ is derived in Theorem \ref{Thm1} and is given by (\ref{stable throughput region for a given value of a}). To determine the union of the stability regions, we need to take the union over all possible values of $(p_{q},p_{a})$. A method used to characterize this union has been proposed in \cite{Sadek} in an analogous problem. It resorts to solving a constrained optimization problem to find the maximum feasible $\lambda_{s}$ corresponding to each feasible $\lambda_{p}$. Proceeding with this same objective, we make use of the result obtained in Lemma \ref{Lemma2}, where it has been established that the maximum achievable $\lambda_{s}$, at a fixed $\lambda_{p}$, increases monotonically with $p_{a}$ irrespective of the choice of $p_{q}$. This suggests that for obtaining the maximum over all attainable $\lambda_{s}$ at a fixed $\lambda_{p}$, we fix $p_{a}=1$ and optimize over $p_{q}$. Consequently, we employ the result presented in \cite{Ashour} (Section III-Theorem $4$), where our problem boils down at $p_{a}=1$ to the case presented therein.     
\end{proof}

It is worth noting that the overall stable throughput region of the system is shown in Fig. \ref{Fig2}.
Proceeding with the system analysis, it remains to study and analyze an important performance metric which is the expected packet delay.

\begin{figure}[t]
\begin{center}
\includegraphics[width=1\columnwidth , height=0.68\columnwidth]{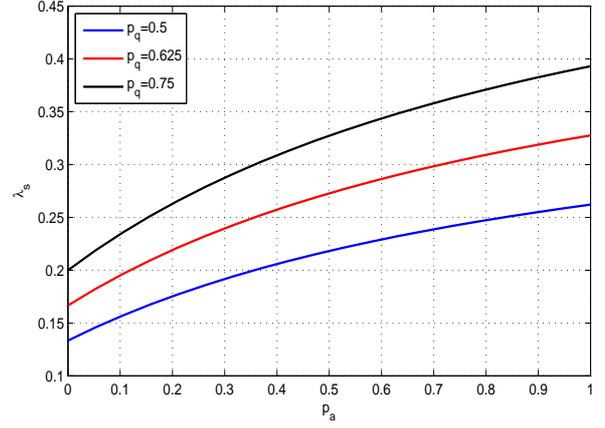}
\caption{Maximum achievable $\lambda_{s}$ versus $p_{a}$ at $\lambda_{p}=0.2$ for different values of $p_{q}$.} \label{Fig5}
\end{center}
\vspace{-7mm}
\end{figure}

\section{Average Delay Characterization}\label{delay} 
In this section, we perform the delay analysis of the system under the proposed scheme. Closed-form expressions are derived for the average delay encountered by the packets of the PU as well as the SU. Furthermore, the effect of tuning $p_{q}$ and $p_{a}$ on the average delay seen by the packets of both users is investigated. 

\begin{theorem}\label{Thm2}
The average delay encountered by the packets of the PU and SU, $D_{p}$ and $D_{s}$, respectively, under the proposed scheme, are given by 
\begin{equation}\label{Dp}
D_{p}= \frac{N_{p} + N_{sp}}{\lambda_{p}}
\end{equation}
\begin{equation}\label{Ds}
D_{s}= \frac{N_{s}}{\lambda_{s}} \quad 
\end{equation}
where $N_{p}$ and $N_{sp}$, the average lengths of $Q_{p}$ and $Q_{sp}$, respectively, are given by
\begin{equation}\label{pk}
N_{p}=\frac{-\lambda_{p}^{2} + \lambda_{p}}{f_{pd}+p_{a}f_{ps}(1-f_{pd})-\lambda_{p}}
\end{equation} 

\begin{equation}\label{Nsp}
N_{sp}= \frac{m \lambda_{p}^{2} + n \lambda_{p}}{\alpha \lambda_{p}^{2} + \beta \lambda_{p} + \gamma}
\end{equation} 
where
\begin{align}
\begin{split}
& m = p_{a}f_{ps}(1-f_{pd}) \bigg[ \frac{(1-p_{q})f_{sd}-f_{pd}}{f_{pd}+p_{a}f_{ps}(1-f_{pd})} \\
&\phantom{m = p_{a}f_{ps}(1-f_{pd})\bigg[}- (1-p_{q})f_{sd} - p_{a}f_{ps}(1-f_{pd}) \bigg ] 
\end{split} \notag \\
\begin{split}
n = p_{a}f_{ps}(1-f_{pd})\left[ f_{pd}+p_{a}f_{ps}(1-f_{pd}) \right ]
\end{split}\notag \\
\begin{split}
\alpha = (1-p_{q})f_{sd} + p_{a}f_{ps}(1-f_{pd})
\end{split}\notag \\
\begin{split}
\beta = \left[ f_{pd}+p_{a}f_{ps}(1-f_{pd}) \right] \left[ -2(1-p_{q})f_{sd}-p_{a}f_{ps}(1-f_{pd}) \right]
\end{split}\notag \\
\begin{split}
\gamma = (1-p_{q})f_{sd} \left[ f_{pd}+p_{a}f_{ps}(1-f_{pd}) \right]^{2}
\end{split}
\end{align}
and $N_{s}$, the average length of $Q_{s}$, is given by
\begin{equation}\label{Ns}
N_{s}=\frac{\lambda_{p}\lambda_{s}A + (\lambda_{s}^{2}-\lambda_{s})B(B+\lambda_{p})}{BC}
\end{equation}
where
\begin{align}
\begin{split}
A=p_{q}f_{sd}[f_{pd}+p_{a}f_{ps}(1-f_{pd})-1]
\end{split} \notag \\ 
\begin{split}
B=f_{pd}+p_{a}f_{ps}(1-f_{pd})-\lambda_{p}
\end{split} \notag \\
\begin{split}
C=(\lambda_{s}-p_{q}f_{sd})[f_{pd}+p_{a}f_{ps}(1-f_{pd})] + p_{q}f_{sd}\lambda_{p}
\end{split}
\end{align}
\end{theorem}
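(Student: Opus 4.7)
The plan is to combine Little's law with a probability generating function (PGF) analysis of the joint stationary distributions of $(Q_p, Q_{sp})$ and $(Q_p, Q_s)$. First, the two delay identities \eqref{Dp} and \eqref{Ds} follow from Little's law once we observe that a PU-originated packet resides in the system precisely when it is stored in $Q_p$ or in $Q_{sp}$, so the average number of PU packets in the system is $N_p + N_{sp}$, while an SU packet occupies only $Q_s$. This reduces the theorem to computing the three stationary mean queue lengths.

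The quantity $N_p$ is the easy part. Because a packet in $Q_p$ is removed whenever it is either successfully received by $d$ or decoded by the SU and admitted to $Q_{sp}$, and both events depend only on channel outcomes, $Q_p$ evolves autonomously as a discrete-time Geo/Geo/1 queue with arrival probability $\lambda_p$ and service probability $\mu_p$ given by \eqref{mu_p}. Squaring the recursion \eqref{queue evolution} for $k=p$, taking expectations in steady state, and using $E[Y_p]=\lambda_p$ together with $P(Q_p>0)=\lambda_p/\mu_p$ produces \eqref{pk} directly.

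The core of the proof is the computation of $N_{sp}$ and $N_s$, both of which are coupled to $Q_p$: arrivals to $Q_{sp}$ occur at rate $p_a f_{ps}(1-f_{pd})$ only when $Q_p > 0$, departures from $Q_{sp}$ occur at rate $(1-p_q)f_{sd}$ only when $Q_p = 0$, and departures from $Q_s$ require $Q_p = 0$ together with the SU selecting its own queue with probability $p_q$. I introduce the joint stationary PGF $G_{sp}(x,y) = E[x^{Q_p} y^{Q_{sp}}]$, write the one-step functional equation induced by \eqref{queue evolution}, and differentiate once in $y$ at $x=y=1$ to obtain a linear relation for $N_{sp}$. The marginal $G_{sp}(x,1)=E[x^{Q_p}]$ was already determined in the $N_p$ step, and this boundary information eliminates the unknown cross moments such as $E[Q_{sp}\mathbf{1}\{Q_p=0\}]$. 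The analogous construction with $G_s(x,y) = E[x^{Q_p} y^{Q_s}]$ yields $N_s$.

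The main obstacle is the bookkeeping for $N_{sp}$, because its arrival and service processes are both driven by $Q_p$, so the one-step PGF identity for $G_{sp}$ couples its interior to its boundary values at $x=0$ and $x=1$; closing the system requires evaluating the identity at these boundaries and combining with the known marginal law of $Q_p$. Once this is done, $N_{sp}$ emerges as a rational function of $\lambda_p$, and matching its numerator and denominator with the right-hand side of \eqref{Nsp} identifies the constants $m, n, \alpha, \beta, \gamma$. The corresponding step for $N_s$ is algebraically simpler, since arrivals to $Q_s$ are independent of everything else and only its service couples it to $Q_p$; the same strategy produces $A, B, C$ in \eqref{Ns}.
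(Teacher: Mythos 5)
Your proposal is correct and takes essentially the same route as the paper's proof: Little's law reduces everything to the three mean queue lengths, $Q_p$ is treated as an autonomous discrete-time Geo/Geo/1 queue, and $N_{sp}$, $N_s$ are extracted from the joint generating functions $\mathbf{E}[x^{Q_p}y^{Q_{sp}}]$ and $\mathbf{E}[x^{Q_p}y^{Q_s}]$ by differentiating the one-step functional equation at $x=y=1$ and closing the system with boundary evaluations and the known marginal law of $Q_p$. The only cosmetic differences are that the paper obtains $D_p$ by splitting packets into direct and relayed paths (equivalent to your direct application of Little's law to the union of $Q_p$ and $Q_{sp}$) and quotes the discrete-time Pollaczek--Khinchine formula for $N_p$ rather than squaring the recursion.
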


\begin{proof}
We start by computing the average delay of the SU's packets followed by the calculation of the average delay of the PU's packets. 

By applying Little's law on $Q_{s}$, we obtain $D_{s}$ exactly as given by (\ref{Ds}). Thus, it remains to calculate $N_{s}$, the average length of $Q_{s}$. The dependence of the service processes at both $Q_{s}$ and $Q_{sp}$ on the state of $Q_{p}$ is inherent from the concept of cognitive radios.  
It is worth noting that the non work-conserving behaviour of the proposed strategy makes the delay analysis of the system mathematically tractable, since $Q_{s}$ and $Q_{sp}$ become independent, i.e., having independent arrivals and departures. To analyze the average delays at different queues, we resort to the moment generating function approach \cite{33}. The moment generating function of the joint lengths of $Q_{p}$ and $Q_{s}$ is defined as 
\begin{align}\label{MGF}
G(x,y)&= \lim_{t \rightarrow \infty} \mathbf{E} \left[ x^{Q_{p}^{t}} y^{Q_{s}^{t}} \right] \notag \\
&= \lim_{t \rightarrow \infty} \displaystyle \sum_{i=0}^{\infty} \sum_{j=0}^{\infty} x^{i} y^{j} \mathbf{P} \left[ Q_{p}^{t}=i, Q_{s}^{t}=j \right]
\end{align} 
where $\mathbf{E}$ and $\mathbf{P}$ denote the statistical expectation and the probability operators, respectively. To illustrate the motivation of employing the moment generating function approach in our delay analysis, we take the derivative of (\ref{MGF}) w.r.t. $y$ which yields $G_{y}(x,y)$ that is given by
\begin{equation}
G_{y}(x,y)=\lim_{t \rightarrow \infty} \displaystyle \sum_{i=0}^{\infty} \sum_{j=1}^{\infty} j x^{i} y^{j-1} \mathbf{P} \left[ Q_{p}^{t}=i, Q_{s}^{t}=j \right]
\end{equation}
Substituting by $x=y=1$ in the above equation, it becomes clear that
\begin{align}
G_{y}(1,1)&=\lim_{t \rightarrow \infty} \displaystyle \sum_{j=1}^{\infty} j \displaystyle \sum_{i=0}^{\infty} \mathbf{P}
\left[ Q_{p}^{t}=i, Q_{s}^{t}=j \right] \notag \\
&=\lim_{t \rightarrow \infty} \displaystyle \sum_{j=1}^{\infty} j \mathbf{P}
\left[Q_{s}^{t}=j \right] = N_{s}
\end{align}
Thus, the sequence of characterizing $N_{s}$ goes as follows. First, we derive $G(x,y)$, then take its derivative w.r.t. $y$ and put $x=y=1$.

Proceeding with the derivation of $G(x,y)$, we make use of the queue evolution form provided by (\ref{queue evolution}). Thus, we have
\begin{align}
&\mathbf{E} \left[ x^{Q_{p}^{t+1}} y^{Q_{s}^{t+1}} \right] =
\mathbf{E} \left[ x^{(Q_{p}^{t}-Y_{p}^{t}+X_{p}^{t})} y^{(Q_{s}^{t}-Y_{s}^{t}+X_{s}^{t})} \right] 
\notag \\
&= (\lambda_{p}x+1-\lambda_{p})(\lambda_{s}y+1-\lambda_{s}) 
\mathbf{E} \left[ x^{(Q_{p}^{t}-Y_{p}^{t})} y^{(Q_{s}^{t}-Y_{s}^{t})} \right]
\end{align}
This follows from the independent arrival processes at $Q_{p}$ and $Q_{s}$, that yield independent Bernoulli distributed random variables, $X_{p}^{t}$ and $X_{s}^{t}$, which produce moment generating functions of $(\lambda_{p}x+1-\lambda_{p})$ and $(\lambda_{s}y+1-\lambda_{s})$, respectively. Expanding the above equation, we have
\begin{align}\label{expansion}
&\mathbf{E} \left[ x^{Q_{p}^{t+1}} y^{Q_{s}^{t+1}} \right]= \qquad \qquad \qquad \qquad \qquad \qquad \qquad \notag \\ &(\lambda_{p}x+1-\lambda_{p})(\lambda_{s}y+1-\lambda_{s})
\bigg\{ \mathbf{E}[\mathbf{1}[Q_{p}^{t}=0,Q_{s}^{t}=0]] \notag \\
& +\left[ \frac{f_{pd}+p_{a}f_{ps}(1-f_{pd})}{x} + (1-p_{a}f_{ps})(1-f_{pd}) \right]\notag \\
& \times \mathbf{E}[x^{Q_{p}^{t}}.\mathbf{1}[Q_{p}^{t}>0,Q_{s}^{t}=0]] \notag \\
& +\left[ \frac{p_{q}f_{sd}}{y} + 1 -p_{q}f_{sd} \right]
\mathbf{E}[y^{Q_{s}^{t}}.\mathbf{1}[Q_{p}^{t}=0,Q_{s}^{t}>0]]\notag \\
& +\left[ \frac{f_{pd}+p_{a}f_{ps}(1-f_{pd})}{x} + (1-p_{a}f_{ps})(1-f_{pd}) \right]\notag \\
& \times \mathbf{E}[x^{Q_{p}^{t}}y^{Q_{s}^{t}}.\mathbf{1}[Q_{p}^{t}>0,Q_{s}^{t}>0]] \bigg\}
\end{align}
where $\mathbf{1}[Z]$ is the indicator function of the discrete random variable $Z$, defined as
\begin{equation}
    \mathbf{1}[Z=z]=
    \begin{cases}
      1, & \text{w.p.}\ \mathbf{P}[Z=z] \\
      0, & \text{w.p.}\ \mathbf{P}[Z \neq z]
    \end{cases}
\end{equation}
Therefore, $\mathbf{E} \left[ \mathbf{1}[Z=z] \right]=\mathbf{P}[Z=z]$.
To explain the terms inside the braces of (\ref{expansion}), we analyze the $4$ possible combinations of the queue states, $Q_{p}^{t}$ and $Q_{s}^{t}$
\begin{itemize}
\item $Q_{p}^{t}=0, ~Q_{s}^{t}=0$

Since both queues are already empty, no departures occur, i.e., $Y_{p}^{t}=Y_{s}^{t}=0$. This explains the first term in the braces in (\ref{expansion}).

\item $Q_{p}^{t}>0, ~Q_{s}^{t}=0$

Clearly, no departures occur at $Q_{s}$ since it is empty, i.e., $Y_{s}^{t}=0$. At the PU side, it transmits a packet whenever it has a non-empty queue. Thus, $Y_{p}^{t}$ is given by
\begin{equation}
    Y_{p}^{t}=
    \begin{cases}
      1, & \text{w.p.}\ f_{pd}+p_{a}f_{ps}(1-f_{pd}) \\
      0, & \text{w.p.}\ (1-p_{a}f_{ps})(1-f_{pd})
    \end{cases}
\end{equation}
This states that a departure occurs at $Q_{p}$ if it is successfully received by the destination, or it is decoded by the SU and is admitted to its relay queue. Otherwise, no departures occur and the packet remains at $Q_{p}$ to be retransmitted in the next time slot. This gives the second term in the braces in (\ref{expansion}). 

\item $Q_{p}^{t}=0, ~Q_{s}^{t}>0$

The PU is idle, thus, $Y_{p}^{t}=0$. Then, the SU gains access to the system and transmits a packet. It randomly selects the source of this packet to be either $Q_{s}$ or $Q_{sp}$. Therefore, $Y_{s}^{t}$ is given by  
\begin{equation}
Y_{s}^{t}=
    \begin{cases}
      1, & \text{w.p.}\ p_{q}f_{sd} \\
      0, & \text{w.p.}\ 1-p_{q}f_{sd}
    \end{cases}
\end{equation}
This states that a departure occurs at $Q_{s}$ if it is selected to transmit, which happens w.p. $p_{q}$, and the transmitted packet is successfully decoded by the destination, which happens w.p. $f_{sd}$. Otherwise, no departures occur. This results in the third term in the braces in (\ref{expansion}).

\item $Q_{p}^{t}>0, ~Q_{s}^{t}>0$

Since the PU has the priority to transmit whenever it has packets, the SU is silent and $Y_{s}^{t}=0$. The PU transmits a packet and the queue $Q_{p}$ evolves exactly following the case of $Q_{p}^{t}>0, ~Q_{s}^{t}=0$ yielding the last term in the braces in (\ref{expansion}).
\end{itemize}
Taking the limit when $t \rightarrow \infty$ at both sides of (\ref{expansion}), we get
\begin{align} \label{G}
G(x,y)=(\lambda_{p}x+1-\lambda_{p})(\lambda_{s}y+1-\lambda_{s}) \notag \\
\times \frac{b(x,y)G(0,0)+c(x,y)G(0,y)}{yd(x,y)}
\end{align}
where
\begin{align} 
\begin{split}
& b(x,y)=xyp_{q}f_{sd}-xp_{q}f_{sd}
\end{split} \notag \\
\begin{split}
& c(x,y)= xp_{q}f_{sd}-y[f_{pd} +  p_{a}f_{ps}(1-f_{pd})] \\
& \phantom{c(x,y)=xp_{q}f_{sd}}
+ xy[f_{sd}+p_{a}f_{ps}(1-f_{pd})-p_{q}f_{sd}]
\end{split}\notag \\
\begin{split}
& d(x,y)=x-(\lambda_{p}x+1-\lambda_{p})(\lambda_{s}y+1-\lambda_{s})\times  \\
& \phantom{d(x,y)=} [f_{pd}+p_{a}f_{ps}(1-f_{pd})+x(1-p_{a}f_{ps})(1-f_{pd})]
\end{split}
\end{align}
From the definition of $G(x,y)$, note that
\begin{align}
\begin{split}
G(0,0)=\lim_{t \rightarrow \infty} \mathbf{E}[\mathbf{1}[Q_{p}^{t}=0,Q_{s}^{t}=0]]
\end{split}\notag \\
\begin{split}
G(x,0)=G(0,0)+\lim_{t \rightarrow \infty} \mathbf{E}[x^{Q_{p}^{t}}.\mathbf{1}[Q_{p}^{t}>0,Q_{s}^{t}=0]]
\end{split} \notag \\
\begin{split}
G(0,y)=G(0,0)+\lim_{t \rightarrow \infty} \mathbf{E}[y^{Q_{s}^{t}}.\mathbf{1}[Q_{p}^{t}=0,Q_{s}^{t}>0]]
\end{split} \notag \\
\begin{split}
& G(x,y)=G(x,0)+G(0,y)-G(0,0) \\
& \phantom{G(x,y)=} +\lim_{t \rightarrow \infty} 
\mathbf{E}[x^{Q_{p}^{t}}y^{Q_{s}^{t}}.\mathbf{1}[Q_{p}^{t}>0,Q_{s}^{t}>0]]
\end{split}
\end{align}
Along the lines of \cite{33}, $G(0,0)$ is evaluated using the normalization condition, $G(1,1)=1$, by taking the limit of (\ref{G}) when $(x,y)\rightarrow(1,1)$, which yields
\begin{equation}\label{G(0,0)}
G(0,0)\!\! = \!\! 
\frac{p_{q}f_{sd}[f_{pd} \!\! + \!\! p_{a}f_{ps}(1\!\! - \!\! f_{pd})\!\! - \!\! \lambda_{p}]
\!\! - \!\! \lambda_{s}[f_{pd}\!\! + \!\! p_{a}f_{ps}(1\!\! - \!\! f_{pd})]}
{p_{q}f_{sd}[f_{pd}+p_{a}f_{ps}(1-f_{pd})]}
\end{equation}
In the derivation of (\ref{G(0,0)}), we use the fact that
\begin{equation}
G(0,1)=\lim_{t \rightarrow \infty} \mathbf{P}[Q_{p}^{t}=0]=1-\frac{\lambda_{p}}{f_{pd}+p_{a}f_{ps}
(1-f_{pd})}
\end{equation}
To find $N_{s}$, we solve for $G_{y}(1,1)$. We evaluate the derivative of (\ref{G}) w.r.t. $y$, then take the limit of the result when $(x,y) \rightarrow (1,1)$. Applying L'Hopital's rule twice, we obtain an equation relating $G_{y}(1,1)$ to $G_{y}(0,1)$ as
\begin{equation}\label{E1}
G_{y}(1,1)=\lambda_{s}-1+\frac{p_{q}f_{sd}}{\lambda_{s}}G_{y}(0,1)
\end{equation} 
In order to characterize $G_{y}(0,1)$, we compute $\left.\frac{\partial G(y,y)}{\partial y} \right|_{y=1}$. We make use of the fact that $\left.\frac{\partial G(y,y)}{\partial y} \right|_{y=1}=N_{p}+N_{s}$, and $G_{y}(1,1)=N_{s}$. After some algebraic manipulation, we get
\begin{align}\label{E2}
G_{y}(1,1)&=
\frac{-(\lambda_{p}+\lambda_{s})^{2}+\lambda_{p}\lambda_{s}+\lambda_{p}+\lambda_{s}}
{f_{pd}+p_{a}f_{ps}(1-f_{pd})-\lambda_{p}-\lambda_{s}} -N_{p} \notag \\
&+\left[ \frac{f_{pd}+p_{a}f_{ps}(1-f_{pd})-p_{q}f_{sd}}{f_{pd}+p_{a}f_{ps}(1-f_{pd})-\lambda_{p}-\lambda_{s}} \right] G_{y}(0,1)
\end{align}
We can easily calculate $N_{p}$ by observing that $Q_{p}$ is a discrete-time $M/M/1$ queue with arrival rate $\lambda_{p}$ and service rate $\mu_{p}$. Thus, applying the Pollaczek-Khinchine formula \cite{PK}, $N_{p}$ is directly given by (\ref{pk}). Solving (\ref{E1}) and (\ref{E2}) together using the result obtained by (\ref{pk}), the term $G_{y}(0,1)$ is eliminated and $N_{s}$ is exactly given by (\ref{Ns}) in Theorem \ref{Thm2}. 

Next, we characterize the average delay experienced by the packets of the PU.
A PU's packet, if directly delivered to the destination, experiences the queueing delay at $Q_{p}$ only.
This happens w.p. $1-\epsilon=\frac{f_{pd}}{1-(1-p_{a}f_{ps})(1-f_{pd})}$, which is the probability that the packet is successfully decoded by the destination given that it is dropped from $Q_{p}$. Otherwise, if the transmission through the direct link between the PU and the destination fails, the packet is probably relayed through $Q_{sp}$ and, hence, experiences the total queueing delay at both $Q_{p}$ and $Q_{sp}$.
This happens w.p. $\epsilon$. Therefore, the average delay that a PU's packet experiences is given by
\begin{equation}\label{1}
D_{p}= (1-\epsilon)\tau_{p} + \epsilon (\tau_{p}+ \tau_{sp})= \tau_{p} + \epsilon \tau_{sp}
\end{equation}
where $\tau_{p}$ and $\tau_{sp}$ denote the average queueing delays at $Q_{p}$ and $Q_{sp}$, respectively. Since the arrival rates at $Q_{p}$ and $Q_{sp}$ are given by $\lambda_{p}$ and $\epsilon \lambda_{p}$, respectively. Then, applying Little's law yields
\begin{equation}\label{2}
\tau_{p}=N_{p}/ \lambda_{p}, \hspace{10mm} \tau_{sp}=N_{sp}/ \epsilon \lambda_{p}
\end{equation} 
Substituting (\ref{2}) in (\ref{1}) yields $D_{p}$ given by (\ref{Dp}). Provided that $N_{p}$ is already known by (\ref{pk}), the calculation of $D_{p}$ boils down to evaluating the average length of $Q_{sp}$, $N_{sp}$. As indicated earlier, the state of $Q_{sp}$ depends on that of $Q_{p}$, so we again employ the moment generating function approach to compute $N_{sp}$. Let 
$H(x,y)=\lim_{t \rightarrow \infty} \mathbf{E}[x^{Q_{p}^{t}}y^{Q_{sp}^{t}}]$ be defined as the moment generating function of the joint queue lengths of $Q_{p}$ and $Q_{sp}$. Using an analogous derivation employed to evaluate $G(x,y)$, we can write $H(x,y)$ as
\begin{equation}
H(x,y)=(\lambda_{p}x+1-\lambda_{p})\frac{b^{'}(x,y)G(0,0)+c^{'}(x,y)G(0,y)}{yd^{'}(x,y)}
\end{equation}  
where
\begin{align}
\begin{split}
& b^{'}(x,y)=x(1-p_{q})f_{sd}(y-1) 
\end{split}\notag \\
\begin{split}
& c^{'}(x,y)=x(1-p_{q})f_{sd}-yf_{pd}-y^{2}p_{a}f_{ps}(1-f_{pd}) \\
& \phantom{c^{'}(x,y)=} +xy[f_{pd}+p_{a}f_{ps}(1-f_{pd})-(1-p_{q})f_{sd}]
\end{split}\notag \\
\begin{split}
& d^{'}(x,y)=x-(\lambda_{p}x+1-\lambda_{p}) \times \\
& \phantom{d^{'}(x,y)=}
[f_{pd}+yp_{a}f_{ps}(1-f_{pd})+x(1-p_{a}f_{ps})(1-f_{pd})]
\end{split}
\end{align}
Following the same footsteps of the approach employed to evaluate $N_{s}$, $N_{sp}$ is shown to be given by (\ref{Nsp}). 
\end{proof}

\begin{figure}[t]
\begin{center}
\includegraphics[width=1\columnwidth , height=0.68\columnwidth]{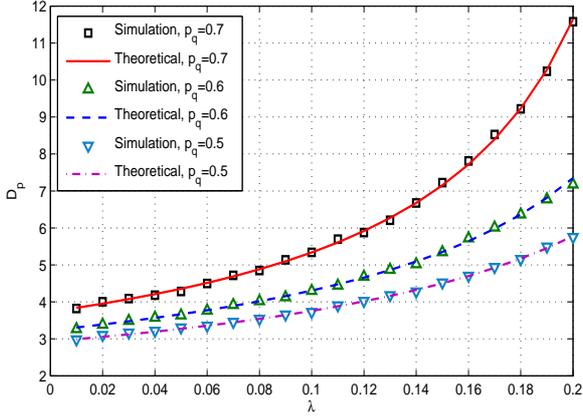}
\caption{Average delay for the PU's packets at $p_{a}=1$ and different values of $p_{q}$}\label{Fig6}
\end{center}
\vspace{-7mm}
\end{figure}

So far, we have obtained closed-form expressions for the delay at both users, i.e., $D_{p}$ and $D_{s}$. Next, we proceed with analyzing the behavior of $D_{p}$ and $D_{s}$ in response to variations in each of $p_{q}$ and $p_{a}$ individually.

\begin{lemma}\label{Lemma3}
Under the proposed randomized service policy with probabilistic relaying, if the system is stable at a fixed operating point $(\lambda_{p},\lambda_{s})$, the average delay experienced by the packets of the PU, $D_{p}$, is a monotonically increasing function in $p_{q}$, while the average delay encountered by the packets of the SU, $D_{s}$, decreases monotonically with $p_{q}$.
\end{lemma}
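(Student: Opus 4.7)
The plan is to differentiate the closed-form expressions for $D_p$ and $D_s$ from Theorem~\ref{Thm2} with respect to $p_q$ and check the sign of each derivative on the stable region. To compress notation I set $\mu_p = f_{pd}+p_a f_{ps}(1-f_{pd})$, $u=(1-p_q)f_{sd}$, and $v=p_a f_{ps}(1-f_{pd})$; then $\mu_p$ and $v$ are $p_q$-free while $du/dp_q = -f_{sd}<0$. Throughout, $(\lambda_p,\lambda_s)$ is fixed and attention is restricted to those $p_q$ for which the stability conditions (\ref{Q21_modified}) and (\ref{Q22}) hold.

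For the PU side, (\ref{pk}) shows that $N_p$ depends only on $\lambda_p$ and $\mu_p$, so $\partial N_p/\partial p_q = 0$ and it suffices to show that the $N_{sp}$ in (\ref{Nsp}) is strictly increasing in $p_q$. The key structural observation is that the denominator $\alpha\lambda_p^2+\beta\lambda_p+\gamma$, re-expressed in $(u,v)$, has discriminant $\mu_p^2 v^2$ and therefore factorises as $(\mu_p-\lambda_p)\bigl[u(\mu_p-\lambda_p)-v\lambda_p\bigr]$, with both bracketed factors strictly positive (the second being exactly (\ref{Q21_modified}) rewritten). Having factored the denominator, I would differentiate $N_{sp}$ with respect to $u$ via the quotient rule; the numerator of $\partial N_{sp}/\partial u$ reduces to a low-degree polynomial in $\lambda_p$ whose sign can be read off from $0<\lambda_p<u\mu_p/(u+v)<\mu_p<1$. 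Multiplying by $du/dp_q<0$ then delivers $\partial N_{sp}/\partial p_q>0$.

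For the SU side, $D_s=N_s/\lambda_s$, so it is enough to show $\partial N_s/\partial p_q<0$. Rewriting the quantities in Theorem~\ref{Thm2} one gets $B=\mu_p-\lambda_p$ and $C=\lambda_s\mu_p-p_q f_{sd} B$, so the stability bound (\ref{Q22}) becomes $\lambda_s\mu_p < p_q f_{sd} B$, i.e.\ $C<0$. Moreover $A<0$ (since $\mu_p<1$) and $\lambda_s^2-\lambda_s<0$, so the numerator of $N_s$ in (\ref{Ns}) is negative as well. Because $B$ is independent of $p_q$, both the numerator and the denominator of $N_s$ are affine functions of $p_q$, so the quotient rule collapses $\partial N_s/\partial p_q$ to a single sign-definite rational expression whose negativity follows again from $C<0$.

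The main obstacle is purely algebraic bookkeeping: the raw formulas (\ref{Nsp}) and (\ref{Ns}) contain many terms that cancel, and the sign of each derivative is far from obvious without the two simplifications above -- the discriminant identity that factorises the $N_{sp}$ denominator along the boundary of $Q_{sp}$ stability, and the rewriting of the $N_s$ denominator that exposes the $Q_s$ stability factor. Both monotonicities reflect the same physical intuition: raising $p_q$ diverts SU service capacity from $Q_{sp}$ to $Q_s$, inflating the relay queue (hence $D_p$) while draining the SU queue (hence $D_s$).
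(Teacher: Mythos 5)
Your proposal is correct, but it takes a genuinely different route from the paper. The paper does \emph{not} differentiate the closed-form expressions of Theorem~\ref{Thm2} at all: its proof of Lemma~\ref{Lemma3} is an indirect argument that invokes Lemma~\ref{Lemma1} and the premise that, at a fixed operating point, each user's delay is governed by the margin between its arrival rate and the corresponding stability boundary in (\ref{Q21_modified}) and (\ref{Q22}) --- increasing $p_{q}$ shrinks the PU's margin (so $D_{p}$ grows, diverging as the boundary meets the operating $\lambda_{p}$) and widens the SU's margin (so $D_{s}$ falls). That argument is short and physically transparent, but the ``delay is monotone in the distance to the boundary'' premise is asserted rather than proven; your direct computation is heavier but self-contained and actually closes that gap. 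I checked your two structural claims and they hold: with $u=(1-p_{q})f_{sd}$, $v=p_{a}f_{ps}(1-f_{pd})$, the denominator of (\ref{Nsp}) has discriminant $\beta^{2}-4\alpha\gamma=\mu_{p}^{2}v^{2}$ and factors as $(\mu_{p}-\lambda_{p})\bigl[u(\mu_{p}-\lambda_{p})-v\lambda_{p}\bigr]$, and differentiating $N_{sp}$ in $w=u+v$ makes the $w$-dependent terms cancel, leaving a numerator proportional to $-\lambda_{p}v(1-\mu_{p})-(\mu_{p}-\lambda_{p})^{2}<0$, so indeed $\partial N_{sp}/\partial p_{q}>0$ while $N_{p}$ is $p_{q}$-free. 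On the SU side your affine-over-affine observation is also right, but one attribution is off: writing $N_{s}=(a_{1}p_{q}+a_{0})/(b_{1}p_{q}+b_{0})$, the derivative's sign is that of $a_{1}b_{0}-a_{0}b_{1}=f_{sd}B\lambda_{s}\bigl[\lambda_{p}\lambda_{s}\mu_{p}(\mu_{p}-1)-B^{2}(1-\lambda_{s})(B+\lambda_{p})\bigr]$, which is negative because $\mu_{p}<1$, $B>0$ and $\lambda_{s}<1$ --- the condition $C<0$ (i.e.\ (\ref{Q22})) is what makes $N_{s}$ positive and the operating point stable, but it is not what forces the derivative's sign. With that small correction your argument is complete and, arguably, more rigorous than the paper's own proof.
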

\begin{proof}
The logic behind proving this result goes as follows. Instead of going forward with deriving the delay at the PU and SU, $D_{p}$ and $D_{s}$, respectively, w.r.t. $p_{q}$, we directly make use of the result presented and proven in Lemma \ref{Lemma1}. We rely on the fact that the delay at both the PU and SU at a fixed $(\lambda_{p},\lambda_{s})$ depends on the difference between the operating values of arrival rates, $\lambda_{p}$ and $\lambda_{s}$, and their corresponding maximums given by (\ref{Q21_modified}) and (\ref{Q22}), respectively. Analyzing the behavior of the delay at the PU's side first, we realize that increasing $p_{q}$ decreases the maximum achievable $\lambda_{p}$ resulting in shrinking the distance between the operating $\lambda_{p}$ and the stability region's boundary for the PU, i.e., the maximum achievable arrival rate at the PU. Therefore, the delay of the PU's packets, $D_{p}$, increases with the increase of $p_{q}$ until it reaches infinity when the maximum achievable arrival rate at the PU, given by (\ref{Q21_modified}), coincides with the operating $\lambda_{p}$. This is attributed to the critical stability of the system in that case.

Using the same rationale at the SU side, it has been established by Lemma \ref{Lemma1} that increasing $p_{q}$ increases the maximum achievable $\lambda_{s}$ for every given $\lambda_{p}$. Thus, as $p_{q}$ increases, the distance between the operating value of $\lambda_{s}$ and its maximum achievable value increases. Therefore, the operating point is pushed deeper in the stability region from the SU's point of view and accordingly, $D_{s}$ decreases.     
\end{proof} 

\begin{figure}[t]
\begin{center}
\includegraphics[width=1\columnwidth , height=0.68\columnwidth]{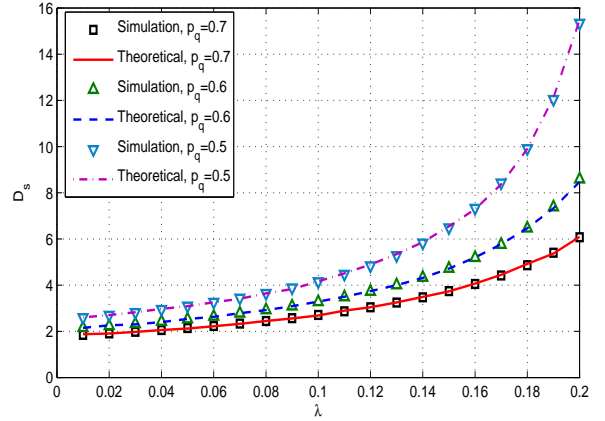}
\caption{Average delay for the SU's packets at $p_{a}=1$ and different values of $p_{q}$}\label{Fig7}
\end{center}
\vspace{-7mm}
\end{figure}

\begin{lemma}\label{Lemma4}
Under the proposed randomized cooperative policy with probabilistic relaying, if the system is stable at a fixed operating point $(\lambda_{p},\lambda_{s})$, the average delay experienced by the packets of the PU, $D_{p}$, is a monotonically decreasing function in $p_{a}$ if $p_{q}$ lies in the interval $\left( 0,1-\frac{f_{pd}}{f_{sd}}\right)$, and increases monotonically with $p_{a}$ if $p_{q}$ lies in the interval $\left(1-\frac{f_{pd}}{f_{sd}},1\right)$. However, the average delay encountered by the packets of the SU, $D_{s}$, decreases monotonically with $p_{a}$, irrespective of the choice of $p_{q}$. 
\end{lemma}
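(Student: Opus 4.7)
The plan is to mirror the strategy used in the proof of Lemma \ref{Lemma3}, invoking Lemma \ref{Lemma2} in place of Lemma \ref{Lemma1}. Rather than differentiating the closed-form expressions (\ref{Dp}) and (\ref{Ds}) with respect to $p_{a}$, which is algebraically unwieldy because both $N_{p}$ and $N_{sp}$ depend on $p_{a}$ through several coupled terms, I would rely on the qualitative fact already exploited in the previous lemma, namely that at a fixed operating point the delay at a user grows as its arrival rate moves towards the boundary of the stability region, shrinks as it is pushed deeper into the interior, and blows up at the stability boundary itself.

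For $D_{s}$, I would invoke the second statement of Lemma \ref{Lemma2}: for any fixed $\lambda_{p}$, the maximum achievable $\lambda_{s}$ given by (\ref{Q22}) is a monotonically increasing function of $p_{a}$, irrespective of $p_{q}$. Hence, as $p_{a}$ grows the gap between the operating $\lambda_{s}$ and its stability boundary widens, pushing the operating point deeper into the stable throughput region from the SU's perspective, which yields the claimed monotone decrease of $D_{s}$ in $p_{a}$ for every $p_{q}\in(0,1)$.

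For $D_{p}$, I would split into the two regimes of $p_{q}$ identified by Lemma \ref{Lemma2}. If $p_{q}\in\left(0,1-\frac{f_{pd}}{f_{sd}}\right)$, the maximum achievable $\lambda_{p}$ given by (\ref{Q21_modified}) increases with $p_{a}$, so the operating $\lambda_{p}$ is pushed farther from the PU stability boundary and $D_{p}$ decreases. If instead $p_{q}\in\left(1-\frac{f_{pd}}{f_{sd}},1\right)$, the maximum achievable $\lambda_{p}$ decreases with $p_{a}$, so the operating point approaches the stability boundary, driving $D_{p}$ upwards and, in the limit where the two coincide, to infinity; this gives a monotonically increasing $D_{p}$ in $p_{a}$.

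The main obstacle will be the rigour of this ``distance to the boundary'' heuristic when $D_{p}$ is the sum of two queueing delays, $D_{p}=\tau_{p}+\epsilon\tau_{sp}$, rather than the delay at a single queue. A fully rigorous argument would either sign $\partial D_{p}/\partial p_{a}$ directly using (\ref{Dp})--(\ref{Nsp}), or analyze $N_{p}$ and $N_{sp}$ separately as functions of $p_{a}$ and show both move in the claimed direction. In the same spirit as Lemma \ref{Lemma3}, however, the PU stability condition (\ref{Q21_modified}) remains the binding one, so the critical stability triggered when the operating $\lambda_{p}$ collides with its maximum forces $D_{p}\to\infty$, pinning down the monotonic direction of $D_{p}$ in $p_{a}$ as dictated by the direction of the maximum achievable $\lambda_{p}$ established in Lemma \ref{Lemma2}.
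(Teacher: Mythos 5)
Your proposal matches the paper's own proof: the paper likewise proves Lemma \ref{Lemma4} by reusing the ``distance to the stability boundary'' rationale from Lemma \ref{Lemma3}, combined with the monotonicity of the maximum achievable $\lambda_{p}$ and $\lambda_{s}$ in $p_{a}$ established in Lemma \ref{Lemma2}. Your closing remark about the heuristic nature of this argument for $D_{p}=\tau_{p}+\epsilon\tau_{sp}$ is a fair self-critique, but the paper accepts exactly the same level of rigour, so no gap exists relative to the published proof.
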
 
\begin{proof}
We follow the same rationale adopted to prove Lemma \ref{Lemma3}. The cornerstone that we rely on is the fact that the delay at both the PU and SU at a fixed $(\lambda_{p},\lambda_{s})$ depends on the difference between the operating values of arrival rates, $\lambda_{p}$ and $\lambda_{s}$, and their corresponding maximums given by (\ref{Q21_modified}) and (\ref{Q22}), respectively. Delay decreases as this difference increases and vice versa. Using this fact along with our knowledge in Lemma \ref{Lemma2}, the proof of this result directly follows.
\end{proof} 
 
\section{Numerical Results}\label{numerical results}
In this section, we analyze the performance of the system under the proposed policy. Extensive simulations are conducted to validate the closed-form expressions obtained for the average delay experienced by the packets of the PU and SU. Furthermore, we characterize and analyze fundamental tradeoffs that arise at both users such as the delay-throughput tradeoff, as well as the tradeoff between the PU and SU in terms of delay. It is shown that the system's performance is flexibly tuned using the parameters $p_{q}$ and $p_{a}$. Moreover, in an attempt to show the potential of employing the proposed policy, we compare the performance of the system under the proposed policy with that of existing schemes. 

\begin{figure}[t]
\begin{center}
\includegraphics[width=1\columnwidth , height=0.68\columnwidth]{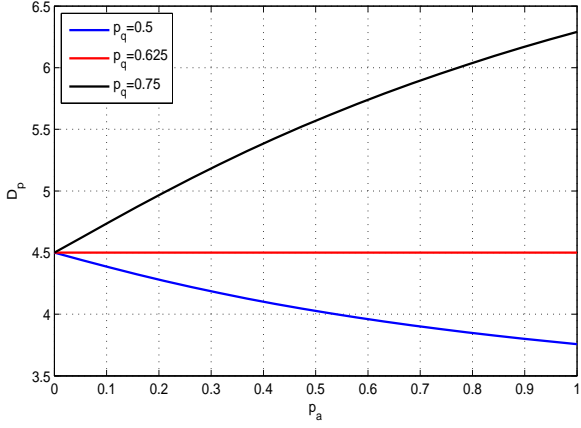}
\caption{Average delay for the PU's packets versus $p_{a}$ for different values of $p_{q}$}\label{Fig8}
\end{center}
\vspace{-7mm}
\end{figure}

In Figs.~\ref{Fig6} and \ref{Fig7}, we plot the average delay experienced by the packets of the PU and SU, respectively, versus $\lambda$, where we choose $\lambda_{p}=\lambda_{s}=\lambda$ for ease of exposition. We fix $p_{a}=1$ and vary $p_{q}$. These figures aim at validating the obtained closed-form expressions for the delays by comparing them to the simulation results. In the simulation, the delay is averaged over $10^{6}$ time slots. It can be viewed that the results obtained through simulations coincide with the results of the closed-form expressions derived in Theorem \ref{Thm2}. This validates the soundness of the mathematical model and the moment generating function approach. Moreover, at a given $\lambda$, when $p_{q}$ increases, $D_{p}$ is shown to increase, while $D_{s}$ decreases. This matches the result stated by Lemma \ref{Lemma3}.

The result obtained in Lemma \ref{Lemma4} is clarified via Figs. \ref{Fig8} and \ref{Fig9}, where we plot the average delay of the PU's packets, given by (\ref{Dp}), and the average delay of the SU's packets, given by (\ref{Ds}), versus $p_{a}$, respectively, for different values of $p_{q}$ at an operating point $\lambda_{p}=\lambda_{s}=0.1$. It is shown in Fig. \ref{Fig8} that $D_{p}$ decreases monotonically with $p_{a}$ as long as $p_{q}<1-\frac{f_{pd}}{f_{sd}}$. Conversely, it increases monotonically with $p_{a}$ for $p_{q}>1-\frac{f_{pd}}{f_{sd}}$ while remaining constant at $p_{q}=1-\frac{f_{pd}}{f_{sd}}$. The intuitive explanation behind this behaviour is the same as that given in our comments on Fig. \ref{Fig4}. If $(1-p_{q})f_{sd} > f_{pd}$, then the PU's average delay is reduced via cooperation, i.e., having more packets getting relayed by the SU reduces the delay at the PU, since the success probability over the relay-destination link is greater than that of the PU-destination link. However, if $(1-p_{q})f_{sd} < f_{pd}$, then it is always in the interest of the PU to retransmit its lost packets rather than getting them relayed via the SU, i.e., rejecting more packets at $Q_{sp}$ reduces the delay at the PU. 
In Fig. \ref{Fig8}, it can be noticed that at a fixed $p_{a}$, the PU's packets experience lower delay at lower values of $p_{q}$, which is the result obtained in Lemma \ref{Lemma3} and referred to in the comments on Fig. \ref{Fig6}. In addition, one can notice that the degradation in the PU's delay with the increase of $p_{q}$ decreases at lower values of $p_{a}$. These results again emphasize the importance of the parameter $p_{a}$ in tuning the system's performance.

\begin{figure}
\begin{center}
\includegraphics[width=1\columnwidth , height=0.68\columnwidth]{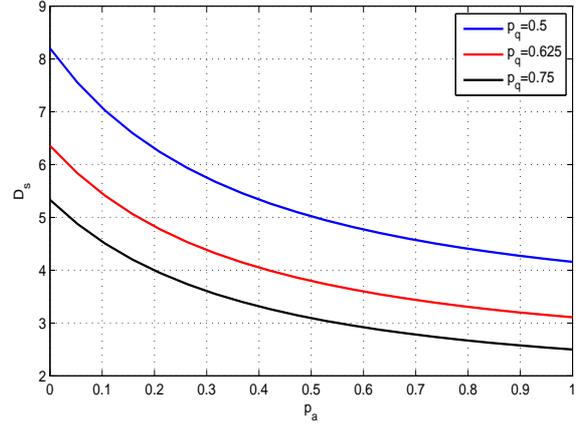}
\caption{Average delay for the SU's packets versus $p_{a}$ for different values of $p_{q}$}\label{Fig9}
\end{center}
\vspace{-7mm}
\end{figure}

On the other hand, we plot in Fig. \ref{Fig9} the average delay of the SU's packets, $D_{s}$, versus $p_{a}$. It can be depicted that $D_{s}$ decreases monotonically with $p_{a}$ irrespective of the choice of $p_{q}$. This assures our previous conjecture that the SU is always benefiting from increasing $p_{a}$, which is a direct consequence of the increase in the availability of time slots in which the PU's queue is empty, since at higher values of $p_{a}$, more packets are dropped from $Q_{p}$. Thus, the SU's packets are more likely to be transmitted. Moreover, at a fixed $p_{a}$, Fig. \ref{Fig9} shows that the SU's packets experience lower delay at higher values of $p_{q}$, which again emphasizes the result obtained in Lemma \ref{Lemma3}.

We realize from Figs. \ref{Fig8} and \ref{Fig9} that a delay tradeoff between the PU and SU arises due to introducing the admission control parameter, i.e., $p_{a}$, in the range of $p_{q}$ values that belong to the interval $\left[ 1-\frac{f_{pd}}{f_{sd}},1 \right)$. Since in this interval of $p_{q}$ values, cooperation becomes in favour of the SU's delay at the expense of the PU's delay, there exists a conflict of interest between both users. Fig. \ref{Fig10} depicts this tradeoff, where we plot $D_{p}$ versus $D_{s}$ at an operating point of $\lambda_{p}=\lambda_{s}=0.1$ at different $p_{q}$ values. In that plot, every point $(D_{s},D_{p})$ corresponds to a certain value of $p_{a}$. Thus, we can now see a dimension of the benefit of tuning the admission control parameter, $p_{a}$. In Fig. \ref{Fig10}, as $p_{a}$ increases, $D_{s}$ decreases, while $D_{p}$ increases if $p_{q}>1-\frac{f_{pd}}{f_{sd}}$ or remains constant at $p_{q}=1-\frac{f_{pd}}{f_{sd}}$.   

Next, we characterize a fundamental tradeoff that arises between the average delay and the throughput at both the PU and SU. Intuitively, when a node needs to maintain a higher throughput, it loses in terms of the average delay encountered by its packets. Given that the system is stable, the node's throughput equals its packet arrival rate. Thus, increased throughput means injecting more packets into the system which yields a higher delay. In Fig. \ref{Fig11}, we illustrate the delay-throughput tradeoff at the PU. Note that, given the stability of the system, the throughput of the PU equals $\lambda_{p}$. We fix the value of $\lambda_{s}$ at $0.2$. Then, at every $\lambda_{p}$, we formulate and solve the following optimization problem

\begin{align}\label{optimization}
& \underset{(p_{q},p_{a})}{\text{minimize}}
& & D_{p} \notag \\
& \text{subject to}
& & \lambda_{p}< \frac{f_{sd}(1-p_{q})[f_{pd} + p_{a}f_{ps}(1-f_{pd})]}
{f_{sd}(1-p_{q})+p_{a}f_{ps}(1-f_{pd})}, \notag \\
&&& \lambda_{s}<p_{q}f_{sd} \left[ 1 - \frac{\lambda_{p}}{f_{pd}+p_{a}f_{ps}(1-f_{pd})} \right], \notag \\
&&& 0<p_{q}<1, \notag \\
&&& 0\leq p_{a} \leq 1.
\end{align} 
Thus, we solve for the optimal value of $(p_{q},p_{a})$ that minimizes $D_{p}$ while simultaneously keeping the system stable at $(\lambda_{p},\lambda_{s})$. Lemmas \ref{Lemma3} and \ref{Lemma4} are of fundamental importance in the approach employed to obtain the solution. Initially, we characterize the feasible set, which is the set of $(p_{q},p_{a})$ values that satisfy the constraints, i.e., keeping the system stable at $(\lambda_{p},\lambda_{s})$. After simple algebraic manipulation on the constraints, we obtain lower and upper bounds on $p_{q}$, $p_{q}^{(l)}$ and $p_{q}^{(u)}$, respectively, as functions of $p_{a}$ as follows  
\begin{align} 
p_{q}^{(l)}=\frac{\lambda_{s} [f_{pd}+p_{a}f_{ps}(1-f_{pd})]}{f_{sd}[f_{pd}+p_{a}f_{ps}(1-f_{pd})-\lambda_{p}]} \label{lb} \\ 
p_{q}^{(u)}=1-\frac{\lambda_{p}p_{a}f_{ps}(1-f_{pd})}{f_{sd}[f_{pd}+p_{a}f_{ps}(1-f_{pd})-\lambda_{p}]}
\end{align}
\begin{figure}
\begin{center}
\includegraphics[width=1\columnwidth , height=0.68\columnwidth]{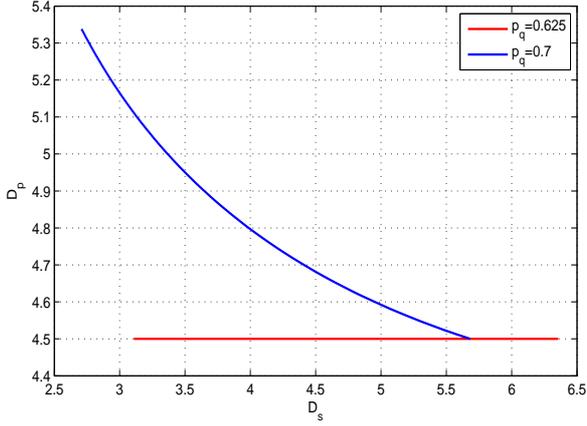}
\caption{PU-SU delay tradeoff}\label{Fig10}
\end{center}
\vspace{-7mm}
\end{figure}
Lemma \ref{Lemma3} establishes that $D_{p}$ is monotonically increasing in $p_{q}$, irrespective of the choice of $p_{a}$. Therefore, to minimize $D_{p}$, we go for the minimum over all feasible values of $p_{q}$ which is obtained through minimizing (\ref{lb}). We realize that $p_{q}^{(l)}$ decreases monotonically with $p_{a}$. This is easily shown through taking the derivative of (\ref{lb}) w.r.t. $p_{a}$ that yields
\begin{align}
\frac{\partial p_{q}^{(l)}}{\partial p_{a}}= \frac{-\lambda_{s}\lambda_{p}f_{sd}f_{ps}(1-f_{pd})}{[f_{sd}(\mu_{p}-\lambda_{p})]^{2}}
\end{align}
which is clearly negative definite, whereby $\mu_{p}$ is given by (\ref{mu_p}). Thus, the minimum over all feasible values of $p_{q}$, let it be denoted by $p_{1_{\text{min}}}^{(l)}$, is obtained via evaluating (\ref{lb}) at $p_{a}=1$. Proceeding with the solution, we make use of the result obtained in Lemma \ref{Lemma4}, and shown by Fig. \ref{Fig8}, that defined the region of $p_{q}$ values at which the PU can benefit from cooperation, i.e., $D_{p}$ decreases with increasing $p_{a}$. Comparing $p_{1_{\text{min}}}^{(l)}$ to the threshold value of $1-\frac{f_{pd}}{f_{sd}}$, the optimal solution is    decided which is either to cooperate or not to cooperate. If $p_{1_{\text{min}}}^{(l)} \leq 1-\frac{f_{pd}}{f_{sd}}$, the optimal value of $(p_{q},p_{a})$ is at $(p_{1_{\text{min}}}^{(l)},1)$, otherwise, no cooperation yields a lower $D_{p}$. This solution relies on the fact that if there is no single value of $p_{q}$ that makes the system stable at $p_{a}=1$, the problem is infeasible. 

\begin{figure}
\begin{center}
\includegraphics[width=1\columnwidth , height=0.68\columnwidth]{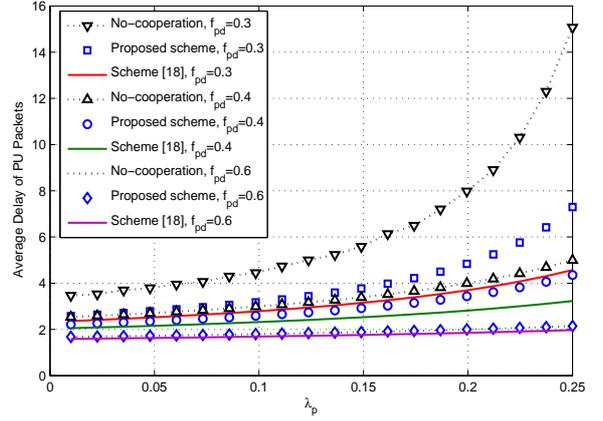}
\caption{Delay-throughput tradeoff at the PU}\label{Fig11}
\end{center}
\vspace{-7mm}
\end{figure}

In Fig. \ref{Fig11}, we plot the optimal PU delay for the proposed scheme, the no-cooperation baseline and the conventional relaying that gives strict priority to the relay queue. System parameters are the same as indicated previously in Section \ref{stable throughput region} except for $f_{pd}$, whereby we show our results at three values of $f_{pd}$, specifically, $0.3$, $0.4$ and $0.6$. It can be realized that cooperation benefits decreases with the increase of $f_{pd}$. We also notice that the performance of the proposed cooperative scheme in terms of the PU's delay coincides with the no-cooperation case at $f_{pd}=0.6$, which means that the PU's interest in cooperation vanishes. The intuitive explanation behind this result goes as follows. As long as the quality of the direct link between the PU and the destination increases, the benefits of cooperation decreases, since the probability of packet success in the direct link increases. This also explains why the threshold of $p_{q}$ values below which cooperation is beneficial to the PU is given by $1-\frac{f_{pd}}{f_{sd}}$. The intuition behind this behavior has previously been stressed on in Section \ref{stable throughput region} and revisited in Section \ref{numerical results}. It is worth noting that the performance of \cite{Ephremedis} in terms of the PU's delay is superior to that of the proposed scheme. This expected result is attributed to the strict priority given in \cite{Ephremedis} to the relay queue. However, in our scheme, we randomly select either the relay queue or the queue of own packets at the SU. Thus, we open room for trading the PU delay for enhanced SU delay and vice versa.

We now turn to the SU side investigating the same delay-throughput tradeoff. We fix the value of $\lambda_{p}$ at $0.2$. Then, at every $\lambda_{s}$, we revisit the problem formulated in (\ref{optimization}) with the objective of minimizing the SU's delay instead of the PU's delay, i.e., minimizing $D_{s}$ instead of $D_{p}$. Solving the problem numerically, we conjecture that it boils down to the solution presented in \cite{Ashour}, where $p_{a}=1$ is always optimal. The result obtained in Lemma \ref{Lemma4} and shown in Fig. \ref{Fig9} suggests that cooperation is always in favour of the SU, i.e., increasing $p_{a}$ reduces $D_{s}$. This explains why the optimal value is always obtained at $p_{a}=1$. The resulting delay-throughput curves for the proposed policy as well as for \cite{Ephremedis} are shown in Fig. \ref{Fig12}. We avoided plotting the no-cooperation baseline case to have a clear view for the comparison between the plotted policies, since the no-cooperation performance is way worse than both. It can be viewed that at the SU, the best achievable performance of the system under the proposed policy in terms of the average delay at the SU side, is superior to the performance of the system under the policy proposed in \cite{Ephremedis}.      

\begin{figure}
\begin{center}
\includegraphics[width=1\columnwidth , height=0.68\columnwidth]{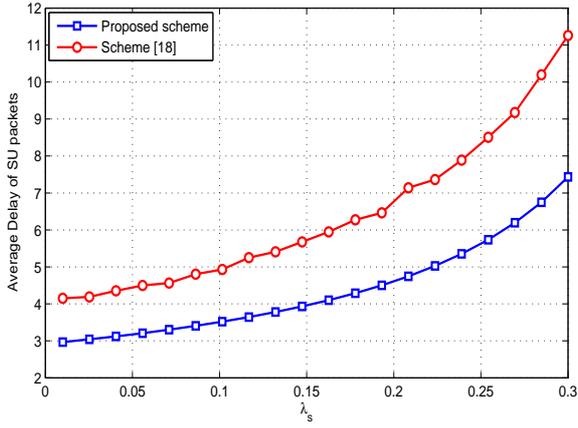}
\caption{Delay-throughput tradeoff at the SU}\label{Fig12}
\end{center}
\vspace{-7mm}
\end{figure}
\section{Conclusion}\label{conclusion}
In this work, we propose a cooperative policy with randomized service, whereby the SU probabilistically selects to serve either the queue of its own data or the relay queue w.p. $p_{q}$ and $(1-p_{q})$, respectively, upon detecting a spectral hole. Moreover, we introduce an admission control parameter, $p_{a}$, that acts as a flow controller to the traffic coming to the relay queue from the PU. A comprehensive analysis of the system's performance metrics such as stable throughput and delay is introduced, where we thoroughly investigate the effect of tuning both $p_{q}$ and $p_{a}$ on the performance of the system. Our major findings are represented in the following. The complete stable throughput region of the system obtained via taking the union over all possible values of $(p_{q},p_{a})$ is shown to strictly contain the stability region of the no-cooperation baseline. Thus, cooperation is shown to expand the stability region of the system. In addition, it has been shown that increasing $p_{q}$ is always in favour of the SU as opposed to the PU in terms of both throughput and delay. This behaviour is irrelevant to the choice of $p_{a}$. However, introducing $p_{a}$ at the relay queue enables us to clearly define the region of $p_{q}$ values at which cooperation is beneficial to the PU. It has been established that as long as $p_{q}<1-\frac{f_{pd}}{f_{sd}}$, cooperation enhances the PU's throughput and reduces its delay. On the contrary, if $p_{q}>1-\frac{f_{pd}}{f_{sd}}$, no cooperation becomes better from the PU's point of interest. This suggests using $p_{a}$ as a switch with which we decide to cooperate or not to cooperate to optimize the performance of the PU depending on the value of $p_{q}$ as well as the channel qualities, i.e., $f_{pd}$ and $f_{sd}$. Finally, we characterize and analyze the delay-throughput tradeoff at the PU and SU, as well as the tradeoff that arises between the delays of the PU and SU. The latter tradeoff shows the potential of using $p_{a}$ as a flow controller at the relay queue.

\bibliographystyle
{IEEEtran}
\bibliography{IEEEabrv,Probabilistic_relaying}
\end{document}